\def\[{\ensuremath{[ \! [}}
\def\]{\ensuremath{] \! ]}}
\def\coprod{\bigoplus}
\def\sqin{\ensuremath{\sqsubset \!\!\!\!\!\! - }} 
\def\C{{\cal{C}}}
\def\S{{\cal{S}}}
\def\G{{\cal{G}}}
\def\L{{\cal{L}}}
\def\lra{\longrightarrow}
\def\0{0}
\def\1m{\iota}
\def\restrict{\!\! \upharpoonright \!\!}
\def\PCF{{\mathrm{PCF}}}
\def\1{1}
\def\E{{\cal E}}
\def\restrict{\!\! \upharpoonright \!\!}
\def\PCF{{\mathrm{PCF}}}
\newcommand{\fst}{{\mathtt{fst}}}
\newcommand{\snd}{{\mathtt{snd}}}
\newcommand{\thro}{{\mathsf{throw}}}
\newcommand{\xs}{{\mathsf{S}}}
\newcommand{\loc}{{\mathsf{loc}}}
\newcommand{\ex}{{\mathsf{Ex}}}
\newcommand{\throwe}{{\mathtt{throw_e}}}
\newcommand{\cse}{{\mathtt{case}}}
\newcommand{\as}{{\mathtt{as}}}
\newcommand{\cont}{{\mathsf{cont}}}
\newcommand{\void}{{\mathtt{void}}}
\newcommand{\match}{{\mathtt{match}}}
\newcommand{\Loc}{\mathsf{Loc}}
\newcommand{\wb}{{\G_{WB}}}
\newcommand{\cwb}{{\cg_{WB}}}
\newcommand{\wkb}{{\mathsf{wkb}}}
\newcommand{\obs}{{\mathrm{obs}}}
\newcommand{\emp}{{{0}}}
\newcommand{\CG}{{\cal{CG}}}
\newcommand{\dom}{{\mathrm{dom}}}
\newcommand{\lbl}{{\mathsf{label}}}
\newcommand{\throw}{{\mathtt{throw}}}
\newcommand{\goto}{{\mathsf{goto}}}
\newcommand{\set}{{\mathtt{set}}}
\newcommand{\deref}{{\mathtt{deref}}}
\newcommand{\If}{{\mathtt{If}}}
\newcommand{\jump}{{\mathsf{jump}}}
\newcommand{\trm}{{\mathrm{Tm}}}
\newcommand{\prog}{{\mathrm{Prog}}}
\newcommand{\lett}{{\mathtt{let}}}
\newcommand{\with}{{\mathtt{with}}}
\newcommand{\pending}{{\mathsf{pending}}}
\newcommand{\ok}{{\mathsf{ok}}}
\newcommand{\caught}{{\mathsf{caught}}}
\newcommand{\store}{{\mathcal{S}}}
\newcommand{\open}{{\mathsf{open}}}
\newcommand{\lamstat}{\lambda_R}
\newcommand{\statcon}{\lambda_{RC}}
\newcommand{\lamex}{\lambda_E}
\newcommand{\lamcon}{\lambda_C}
\newcommand{\statex}{\lambda_{RE}}
\newcommand{\lcre}{\lambda_{RCE}}
\newcommand{\eqrc}{\simeq_{RC}}
\newcommand{\spc}{\hspace{2pt}}
\newcommand{\inj}{\mathrm{in}}
\newcommand{\inl}{{\mathtt{in_l}}}
\newcommand{\inr}{{\mathtt{in_r}}}
\newcommand{\callcc}{{\mathtt{callcc}}}
\newcommand{\nat}{{\mathtt{nat}}}
\newcommand{\Lr}{{\cal{L_R}}}
\newcommand{\Lc}{{\cal{L_{RC}}}}
\newcommand{\Lce}{{\cal{L_{RCE}}}}
\newcommand{\Llce}{{\cal{L_{CX}}}}
\newcommand{\xn}{\ {\mathtt{exn}}}
\newcommand{\nxn}{{\mathtt{new\_exn}}}
\newcommand{\thread}{{\mathsf{thread}}}
\newcommand{\threads}{{\mathsf{threads}}}
\newcommand{\unit}{{\mathtt{unit}}}
\newcommand{\refn}{{\mathtt{ref}}}
\newcommand{\mkxn}{{\mathtt{mkexn}}}
\newcommand{\var}{{\mathtt{var}}}
\newcommand{\pcf}{{\mathrm{PCF}}}
\newcommand{\callc}{{\mathsf{callcc}}}
\newcommand{\rais}{{\mathtt{raise} }}
\newcommand{\handle}{{\mathtt{handle}}}
\newcommand{\new}{{\mathtt{new}}}
\newcommand{\xcell}{{\mathsf{exn}}}
\newcommand{\cell}{{\mathsf{cell}}}
\newcommand{\raze}{{\mathsf{raise}}}
\newcommand{\ski}{[()]}
\newcommand{\handl}{{\mathbf{handle}}}
\newcommand{\T}{{\mathbf{T}}}
\newcommand{\Fam}{{\mathsf{Fam}}}
\newcommand{\catch}{{\mathtt{catch}}}
\newcommand{\com}{{{1}}}
\newcommand{\exn}{{\mathtt{exn}}}
\newcommand{\Cu}{{\EuScript C}}
\newcommand{\abort}{{\EuScript A}}
\newcommand{\arl}{{\longrightarrow}}
\newcommand{\ff}{{\mathbf{ff}}}
\newcommand{\tru}{{\mathtt{tt}}}
\newcommand{\then}{{\mathtt{then}}}
\newcommand{\el}{{{\mathtt{else}}}}
\newcommand{\id}{{\mathsf{id}}}
\newcommand{\app}{{\mathsf{app}}}
\newcommand{\inn}{{{\mathtt{in}}}}
\newcommand{\catchc}{{\mathtt{callcc}}}
\newcommand{\Lx}{{\cal{L}_{X}}}
\newcommand{\ter}{{\mathbf{1}}}
\newcommand{\R}{{{R}}}
\newtheorem{theorem}{Theorem}[section]
\newtheorem{lemma}[theorem]{Lemma}
\newtheorem{definition}[theorem]{Definition}
\newtheorem{proposition}[theorem]{Proposition}
\newtheorem{corollary}[theorem]{Corollary}
\newif\ifqed
\newcommand{\qed}{\global\qedfalse\noindent\unskip\penalty50\null\nobreak\hskip1em\nobreak\hfill$\Box$}
\let\Box\undefined
\DeclareMathSymbol{\Box}{\mathord}{AMSa}{"03}
\newenvironment{proof}[1][Proof]{\global\qedtrue\trivlist\item[\hskip\labelsep\textsc{#1:}]}{\ifqed\qed\fi\endtrivlist}
\begin{document}
\title{Combining and Relating Control Effects and their Semantics }
\author{James Laird
\institute{Department of Computer Science, University of Bath, UK}}
\date{}
\maketitle
\begin{abstract}Combining local exceptions and first class continuations leads to programs with complex control flow, as well as the possibility of expressing powerful constructs such as resumable  exceptions.  We describe and compare games models for a  programming language which includes these features, as well as higher-order references. 
They are obtained by contrasting methodologies: by annotating sequences of moves with ``control pointers'' indicating where exceptions are thrown and caught, and by composing the  exceptions and continuations monads.

 The former approach allows an explicit representation of control flow in games for exceptions, and hence a straightforward proof of definability (full abstraction) by factorization, as well as offering the possibility of a semantic approach to control flow analysis of exception-handling. However, establishing soundness of such a concrete and complex model is a non-trivial problem. It may be resolved by establishing  a correspondence  with the monad semantics, based on erasing explicit exception moves and replacing them with control pointers.  
\end{abstract}

\section{Introduction}
Control effects such as exceptions and continuations are  key features of higher-order programming languages. They are typically used to recover from errors, and may result in complicated and unpredictable control flow in programs. Therefore, principles for reasoning about notions such as \emph{exception safety} are potentially useful and important. Denotational semantics provides one basis for such principles. Here, broadly speaking, there are two  approaches to describing computational effects.  Constructions such as \emph{monads}, and \emph{continuation-passing-style interpretations} yield  useful algebraic theories for reasoning \emph{soundly} about programs, although they impose additional layers of definition and interpretation through which reasoning about programs must be filtered, particularly in the presence of  properties such as locality. By contrast, \emph{game semantics} provides a framework in which to model combinations of effects more directly by the relaxation of constraints on strategies representing functional programs. This approach has  been used successfully to give \emph{fully abstract} interpretations of many features, including an   account of locality for  features such as state \cite{AHM}. However, the  combinatorial nature of games models means that reasoning about denotations  --- for example, proving basic soundness results --- can be  difficult in the absence of structuring principles.

Thus it can be useful to relate the direct (games) and indirect (monads, CPS) approaches to effects, to gain the advantages of both representations. This paper will do so for exceptions and continuations. In the process, we construct a first fully abstract model for a language which combines continuations and locally declared exceptions, as in Standard ML of New Jersey. Although many control structures can be implemented using either feature, exceptions and continuations exhibit several subtle but significant differences in behaviour: one way of understanding these is by studying the interaction of the two effects in combination. (For example, observing that exceptions break key equational rules which hold for continuations \cite{LE}.) 
Combining exceptions and continuations also provides a way of interpreting further, powerful control constructs: they may be used to macro-express \emph{resumable} exceptions, and implement dynamic delimited control operators such as \emph{prompts} \cite{GRR}.

Exceptions and continuations also provide a test case for semantic theories of  combining  algebraic effects,  studied in detail in \cite{HLPP}. Here we shall simply use the fact that  there is a distributive law of the monad $\_ + E$ (exceptions) over  $\R^{\R^{\_}}$ (continuations)  (which exist for objects  $E$ and $\R$ whenever the relevant categorical constructions do), since the exceptions monad  distributes over any other monad.  Thus we have an exceptions-and-continuations monad $\R^{\R^{\_ +E}}$.

How can this monad  be related to a game semantic account?  In the case of  first-class continuations (on their own), there is a simple correspondence between the games and monadic interpretations --- relaxing the \emph{well-bracketing condition} on strategies renders  the lifted sum  monad $\Sigma\_$ introduced in \cite{AMV}  isomorphic to the continuations monad $\R^{\R^\_}$, where $\R$ is the  ``one-move game'', giving both direct and indirect (continuation-passing style) interpretations of  call/cc \cite {LT}.

The case of exceptions is more complicated.  We may interpret a single global exception by adding to our games distinguished ``exception answer'' moves for each question. Extending the continuations monad with such an answer yields a monad formally equivalent to $\R^{\R^{\_ + \1}}$\footnote{Another approach in \cite{ntex} also uses an exceptions monad on a category of nominal games --- here we focus on more concrete models with implicit state.} In the presence of local state, this is sufficient to macro-express  local exception declaration, as we may use imperative variables both as flags to indicate which exception has been set, and to carry exceptional values. However, this leaves open the problem of identifying the elements of the model definable using local exception handling, and their intrinsic equivalence.

Exceptions have also been more difficult to incorporate into the simple picture of relaxing constraints on strategies to get more powerful effects;  locally declared exceptions can be interpreted directly by relaxing the bracketing condition to a ``weak bracketing'' condition \cite{Lli}, but fully capturing this behaviour also  requires new information to be added to strategies in the form of additional ``control pointers'' attached to sequences. 
 Relaxing the weak bracketing condition also gives a  straightforward and intuitively natural alternative denotation for call-with-current-continuation in the context of exception handling --- playing a control pointer to  a ``closed'' move allows the handler-context to be reset. However, this representation of continuations and exceptions is rather implicit, and does not lend itself to reasoning about equivalence between denotations of programs --- even to the limited extent of  proving soundness with respect to the operational semantics.

The solution adopted here is a  correspondence with the exceptions monad and CPS interpretations, given by relating exception-arenas to control games by replacing exception moves with control pointers (in this case, indicating which question is \emph{pending} when an exception is thrown).  Finally, we prove  full abstraction results for the control games models using \emph{factorization} into the model with only local control defined in \cite{AHM}.

\section{An Effectful Functional Programming  Language} 
We shall first describe a simply-typed call-by-value programming language $\Lce$ with (locally declared) general references,  first-class continuations and local exceptions (which might be considered as a simply-typed fragment of SML of New Jersey). The core of the language, $\L$, is a simply-typed call-by-value $\lambda$-calculus based on the computational $\lambda$-calculus \cite{MogC}.

  \emph{Types} of $\L$ are generated  from the product, sum (and their units  $\com$ and $\emp$) and function types:\\
$S,T: = \emp\ |\ \com\ |\ S \times T\ |\  S + T\  |\ S \rightarrow T$\\
We distinguish \emph{computation} and \emph{value} terms.
\emph{Values} are given by the grammar:\\
\begin{tabular}{l l}
$U,V:=$ &$ x\ |\ () \ |\ \langle U,V\rangle\ |\ \inj_1(V)\ |\ \inj_2(V)\ |\  \lambda x.M $\\

\end{tabular}\\ 
\emph{Computations} are given by the grammar:\\
\begin{tabular}{l l}
$M,N:=$ &$ [V]\ |\ \lett\spc x= M\spc \inn \spc N\ |\  \void\spc V\ |\ U\spc V\ |\ \match\spc V \spc \as\spc (x,y). M\ |\  \cse\spc V\spc \as \spc \inj_1(x).M| \inj_2(x).N $\\

\end{tabular}\\
 Typing judgements, of the form $\Gamma \vdash_c M:T$ for computations, and $\Gamma \vdash_v V:T$ for values,  are given in Table 1. We write $M;N$ for $\lett \spc x=M\spc\inn\spc  N$, if $x$ is not free in $M$ or $N$. 

\subsection{Computational Effects}
Computational effects  are introduced by adding constructs for declaring references and exceptions, and capturing the current continuation as a first-class function, as follows: 

\paragraph{References} The type $\var[T]$ of references to values of type $T$ is \emph{defined}  to be $(T \rightarrow \com) \times (\com \rightarrow T)$ --- the product of the types of its methods, assignment and dereferencing, which may be recovered by left and right projection, respectively ---  i.e. given $a:\var[T]$ and $V:T$, we sugar $\match \spc a\spc\as\spc (x,y).x\spc V$ as $a:= V$, and $\match \spc a \spc \as\spc (x,y). y\spc ()$ as $\deref(a)$.

Thus the only further  syntax we need to add to our type theory  is a constant (value) $\new:\com \rightarrow \var[T]$ for declaring a new reference. We write     $\lett\spc x\spc = (\new\spc ()) \spc \inn\spc  x:=V;M$ as $\new\spc x:=V.M$. 

\paragraph{Exceptions} The type of $\exn$ of exceptions is similarly defined to be the product  $((\com \rightarrow \emp) \rightarrow \com) \times (\com \rightarrow \emp)$  of its method types: \emph{throwing} of type $\com \rightarrow \emp$ and \emph{catching}, of type $(\com \rightarrow \emp) \rightarrow \com$.\footnote{The ``thunked'' empty type $\com \rightarrow \emp$ is used to represent the type of computations which do not return a value.}  Given $e:\exn$, we sugar  $\match \spc e \spc \as\spc (x,y).x \spc \lambda ().N$  and $\match \spc e\spc \as (x,y).y\spc ()$ as $\catch\spc e\spc \inn \spc N$ and $\throw(e)$, respectively. 

Thus to extend our type theory with exceptions it is sufficent to add a value $\nxn:\com \rightarrow \exn$ for declaring a new exception. 
  
\paragraph{Continuations} As in New Jersey SML, we introduce first-class continuations via a value $\callcc:((T \rightarrow S) \rightarrow T) \rightarrow T$, which passes a first class representation of the current \emph{continuation} (as a value of type $T \rightarrow S$ for arbitrary $S$) to its argument.

\begin{table}
\begin{center}
\begin{tabular}{c c c} 
{\Large $\over \Gamma,x:T \vdash_v x:T$} & {\Large $ \Gamma \vdash_v V:T \over \Gamma \vdash_c [V]:T$}  & {\Large $\Gamma \vdash_c M:S\ \ \ \ \Gamma,x:S \vdash_c N:T \over \Gamma \vdash_c \lett \spc x = M\spc \inn\spc N:T$}\\\\
{\Large $\over \Gamma \vdash_v ():1$}&  & {\Large $\Gamma \vdash_v V:0 \over \Gamma \vdash_c \void\spc V: T$} \\\\
{\Large $\Gamma \vdash_v U:S \ \ \ \ \Gamma \vdash_v V:T  \over \Gamma \vdash_v \langle U,V\rangle :S \times T$} & & {\Large $\Gamma \vdash_v V:S \times S' \ \ \ \ \Gamma,x:S,y:S' \vdash_c M:T \over  \Gamma \vdash_c \match\spc V\spc \as\spc (x,y).M:T$}  \\\\
{\Large $\Gamma \vdash_v U:S \over \Gamma \vdash_v \inj_i(V):T_1 + T_2$}\mbox{$i \in \{1,2\}$}& & {\Large $\Gamma \vdash_v V:S + S' \ \ \ \Gamma,x:S \vdash_c M:T\ \ \  \Gamma,x:S' \vdash_c N:T \over \Gamma \vdash_c \cse\spc V\spc \as\spc \inl(x).M|\inj_2(x).N:T$}\\\\
{\Large $\Gamma, x:S \vdash_c M:T \over \Gamma \vdash_v \lambda x.M:S \rightarrow T$} & & {\Large $\Gamma \vdash_v U:S\rightarrow T\ \ \ \ \Gamma \vdash_v V:S \over \Gamma \vdash_c U \spc V:T$}\\\\
\end{tabular}
\caption{Typing Judgements for Computations and Values}
\end{center}
\end{table}
We make use of the following fragments of $\Lce$ --- the purely functional fragment $\L$,  the fragment  $\Lr$ with local control  (i.e. references but no continuations or exceptions, omitting  the constants $\nxn$ and $\callcc$: this is essentially the language   defined  in \cite{AHM}, with its games model), and   the fragment $\Lc$ with continuations and references but no exceptions. 
\subsection{Operational Semantics}
To give an operational semantics for $\Lce$, we introduce constants representing the capacity to read from and write to a location, and raise and handle an exception, and  a new constructor, representing composition with the top-level continuation. 
Let ${\mathcal{L_{CE}^{\#}}}$ be the extension of $\Lce$ with:
\begin{itemize}
\item An unbounded set of pairs of constants $(\set(a),!a)$.
\item An unbounded set of pairs of constants $(\throw(e),\catch(e))$. 
\item An operation $\#\_$ taking computations of type $\com$ to computations of type $T$. 

\end{itemize}
\emph{Evaluation contexts} $E[\_]$ are given by the grammar:
$$E[\_]::= [\_]\  |\ \lett\spc x = E[\_]\spc \inn \spc M\ |\  \catch(e)\spc \lambda x.E[\_]$$ 
$E_h[\_]$ denotes an evaluation context without a $\catch(h)\spc \lambda x.\_$  in the spine --- i.e. given by the above grammar subject to  $e \not =h$.  

The  ``small-step'' operational semantics for reducing a term in an environment $\E$ (a set of location names $\loc$ and store $\xs$, and a set of exception names $\ex$)  is given in  (Table \ref{ops2}). Variable names not occurring on the left of a rule are assumed fresh.
\begin{table}
\begin{center}
\begin{tabular}{l c l}
$E[\cse\spc \inj_i(V)\spc  \as \spc \inj_1(x).M_1|\inj_2(x).M_2],\E$&$ \lra$&$ E[M_i[V/x]],\E$ \\ 
$E[\match \spc  \langle U,V\rangle \spc \as\spc(x,y).M],\E$ &$\lra$&$ E[M[U/x,V/y]],\E$\\
$E[(\lambda x.M)\spc V],\E$&$ \lra$&$ E[M[V/x]],\E$ \\ $E[\lett \spc x = [V]\spc \inn\spc M],\E$ &$ \lra $&$ E[M[V/x]],\E$\\
$E[\new\spc ()],\E[(\loc)]$ & $\lra$ &$  E[[\langle \set(a),!a\rangle]],\E[\loc\cup\{a\}]$ \\
$E[\set(a) \spc V],\E[\xs] $&$\lra$&$ E[[()]],\E[\xs[a \mapsto V]]$ \\$E[!a\spc ()],\E[\xs]$ &$\lra$&$ E[[\xs(a)]],\E$\\ 
 $E[\nxn\spc ()],\E[\ex]$&$\lra$&$ E[[\langle \catch(e),\throw(e)\rangle]],\E[\ex\cup\{e\}]$ \\
$E[\catch(e)\spc \lambda x.E_e[\throw(e)\spc ()]],\E $&$\lra$&$ E[[()]],\E$ \\
$E[\callcc\spc V], \E$ &$\lra $ & $ E[V\spc \lambda x.\#E[x]],\E$\\
$E[\#(M)],\E $ &$\lra $ & $M,\E$

\end{tabular}
\caption{Operational Semantics of $\Lce$}\label{ops2}
\end{center} 
\end{table}
For a program (computation) $M:\com$, we write $M\Downarrow$ if $M,\varnothing$ reduces to $\ski$. Observational approximation and  equivalence are defined with respect to this  notion of convergence: $M \lesssim N$ if for all closing contexts, $C[\_]:\com$, $C[M] \Downarrow$ implies $C[N]\Downarrow$. $M \approx N$ if $M \lesssim N$ and $N \lesssim M$.

\subsection{Expressiveness}
We make some remarks on the expressiveness of our language. Although we have used a simplified version of exceptions   which  do not carry explicit  values, we may macro-express value-carrying  exceptions by using references to pass values through the store. For example, for any type $T$, define the type $\exn[T]$ of exceptions carrying values of type $T$ to be  $((\com \rightarrow \0) \rightarrow T) \times (T \rightarrow \0)$, so that applying  right-projection to a value raises an exception with that value, and applying left projection to a (thunked) computation captures an exception and returns the value it carries.  Then we may define an object declarating an exception of type $T$ --- $\nxn_T:\com \rightarrow \exn[T] =_{df}$
$$\lambda ().\new\spc a.\nxn\spc e. [\langle \lambda f.\catch\spc e \spc \inn \spc (f\spc ());\deref(a),\lambda x.(a:=x);\throw(e) \rangle]$$

We may represent ML or Java-style exception \emph{handling} --- i.e. including code to be run if only if a given exception is caught --- by using exceptions \emph{or} continuations to escape from the handler context if an exception is not raised, defining e.g.
 $${\mathtt{handle}} \spc e\spc\inn\spc N\spc \with\spc M   =_{df} \callcc(\lambda k. (\catch\spc e \spc \inn\spc  N;(k\spc ()));M)$$

By combining references, exceptions and continuations we may express \emph{resumable exceptions} which may return to the point at which they were raised. 
e.g. define the declaration ${\mathtt{resumable\_exn}}: ((1\rightarrow 0) \rightarrow (T \rightarrow  \emp)) \times (\com \rightarrow T)$ as follows: 
$$\new \spc a\spc \inn\spc \nxn\spc e\spc \inn \spc [\langle \lambda f. (\catch\spc e \spc \inn\spc  f\spc ());\deref(a), \callcc(\lambda k.a:=k;\throw(e))\rangle] $$
Right projection captures the current continuation and raises a (local) exception, left projection traps the exception and returns the continuation from the point it was thrown as a first-class function.


Finally, we note that exceptions and continuations are used in \cite{GRR} to implement prompts in Standard ML of New Jersey. Prompts are a form of locally declared, dynamically bound, delimited control operator which may be used to express   local exceptions, as defined here, and a \emph{delimited} form of $\callcc$. However, the implementation of prompts in SML$_{\mathrm{NJ}}$ uses global variables and is not therefore fully compositional: we leave a semantic investigation of the relationship between exceptions, continuations and delimited control as future work.  
\section{Denotational Semantics: Preliminaries}
First, we fix what we mean by a model of the type-theory $\L$ (essentially, a model of the computational $\lambda$-calculus \cite{MogC}): 
\begin{itemize}
\item a category $\C$ with finite, distributive  coproducts and products (including terminal and initial objects) and 
\item a strong monad $(\Sigma,\eta,\mu,\tau)$ on $\C$ such that for any $A$ and $B$ in $\C$, the exponential $A \Rightarrow \Sigma B$ exists. 
\end{itemize}
\emph{Types} are interpreted  as objects of $\C$  ---  $\[\com\]$ and  $\[\emp\]$ are the terminal and initial objects and  $\[S \times T\] = \[S\] \times \[T\]$, $\[S + T\] = \[S\] + \[T\]$ and  $\[S \rightarrow T\] = \[S\] \Rightarrow \Sigma \[T\]$. \\
For a context $\Gamma = x_1:T_1,\ldots, x_n:T_n$, define $\[\Gamma\] = \[T_1\]\times \ldots \times \[T_n\]$.  
\begin{itemize}
\item \emph{Values} $\Gamma \vdash_v V:T$ are interpreted as morphisms from $\[\Gamma\]$ to $\[T\]$.  
\item \emph{Computations} $\Gamma \vdash_c M:T$ are interpreted as  morphisms from $\[\Gamma\]$ to $\Sigma\[T\]$. 
\end{itemize}
Formal semantics  are given in Table 3.
\begin{table}

\begin{tabular}{l c l}
$\[\Gamma \vdash_v ():\com\]$&$ =$&$ t_{\[\Gamma\]}$\\
$ \[\Gamma,x:T \vdash_v x:T\]$&$ = $&$\pi_r$\\ 
$\[\Gamma \vdash_v \langle U,V \rangle:   S\times T\]$&$ =$ &  $\langle \[\Gamma \vdash_v U:S \], \[\Gamma \vdash V:T\]\rangle$\\
$\[\Gamma \vdash_v \inj_i(V):T_1+ T_2\]$&$ =$&$ \[\Gamma \vdash_v V:T_i\];\iota_i$\\
$\[\Gamma \vdash_v \lambda x.M:S \rightarrow T\]$&$ =$&$ \Lambda(\[\Gamma,x:T \vdash_c  M:T\])$\\
\\
$\[\Gamma \vdash_c \void(V):T\]$&$ = $&$\[\Gamma \vdash_v V:0\];i_{\Sigma\[T\]}$\\
$\[\Gamma \vdash_c [V]:T\]$&$ = $&$\[\Gamma \vdash_v V:T\];\eta$\\
 $\[\Gamma \vdash_c U\spc V:T\]$&$ =$&$ \langle \[\Gamma \vdash_v U:S \rightarrow T\], \[\Gamma \vdash_v V:T\]\rangle;\app$\\
$\[\Gamma \vdash_c\cse \spc V\spc \as\spc \inj_1(x).M|\inj_2(x).N\]$&$ =$&$ \langle \id_{\[\Gamma\]}, \[\Gamma \vdash_v V:S_1 +S_2\]\rangle;[\[\Gamma,x:S_1 \vdash_c M:T\],\[\Gamma,x:S_2 \vdash_c N:T\]]$   \\
 $\[\Gamma \vdash_v \match \spc V\spc \as (x,y).M:T\]$&$ = $&$\langle\id_{\[\Gamma\]}, \[\Gamma \vdash_v V:S_1 \times S_2\]\rangle;\[\Gamma,x:S_1,y:S_2\vdash_c M:T\]$  \\
 $\[\Gamma \vdash_c \lett \spc M = x\spc \inn\spc N:T\]$&$ =$&$ \langle \id_{\[\Gamma\]},\[\Gamma \vdash_c M:S\]\rangle;t_{\[\Gamma\],\[S\]};\[\Gamma,x:S \vdash_c M:T\]^*$\\  
\end{tabular}
\caption{Interpretation of $\L$-terms}
\end{table}

\subsection{Game Semantics}
We now review the game semantics of $\L$ and its extensions with references \cite{AHM} and continuations \cite{LT} (to which we refer for further details), in a category of arenas and thread-independent strategies.

An \emph{arena} $A$ is a bipartite labelled  forest --- a triple $\langle M_{A},\vdash_{A},\lambda_A \rangle$, where $M_A$ is the set of nodes (moves),  $\vdash_{A} \subseteq M_A \times M_{A}$ (the \emph{enabling} relation) is the set of edges, and $\lambda_A:M_{A} \rightarrow \{Q,A\}$ is a labelling function which partitions moves as  \emph{answers} (A) or \emph{questions} (Q), such that answers are enabled by questions. The set of root nodes of the forest is denoted $M^I_A$ --- these are called \emph{initial moves}.  
Partitioning of  $M_A$ into \emph{Player} and \emph{Opponent} moves may be inferred from the requirement that initial moves are Opponent moves, and that Player moves are enabled by Opponent moves and vice-versa.

Key constructions on arenas are:
\begin{itemize} 
\item The disjoint sum of forests (product): $A \times B = (M_A + M_B,\vdash_A + \vdash_B, [\lambda_A,\lambda_B])$. 
\item The graft of $A$ onto the roots of $B$ (function space): $A \Rightarrow B =\\ (\coprod_{m \in M^I_B}M_A + M_B, (\coprod_{m \in M^I_B} \vdash_A) + \vdash_B \cup \{(m,\inj_m(n))\ |\ m \in M^I_B,n \in M^I_A\},[[\lambda_A\ |\ m \in M^I_B],\lambda_B])$.  
\end{itemize}

A \emph{legal justified sequence} over the  arena $A$ is a finite alternating  sequence of moves of $A$ in which each occurrence of a  non-initial move $n$  comes with a unique \emph{justification pointer} to a preceding occurrence of a move $m$ which  enables $n$ (i.e. such that $m \vdash_A n$). 

A  strategy $\sigma$ over an arena $A$ is a non-empty, even-prefix-closed set of even-length alternating justified sequences over $A$, satisfying: 
\begin{description}
\item [Determinacy]If $sa,sb \in \sigma$ then $b=c$. 
\item [Thread-independence] If $r,s,t$ are even-length legal sequences such that $t$ is the interleaving of $r$ and $s$, then $t  \in \sigma$ if and only if $r,s \in \sigma$.
\end{description} 

The \emph{pending question prefix} (if any) of a justified sequence $s$ is the greatest prefix of $s$ ending with a question which does not occur between an answer and its justifying question in $s$: i.e.
\begin{itemize}
\item $\pending(sq) = q$
\item $\pending(sqta) = \pending(s)$, where $q$ justifies $a$.
\end{itemize} 
A strategy $\sigma$ is \emph{well-bracketed} if it satisfies the following condition:
\begin{description}  
\item  [Well-Bracketing]Any answer-move played by $\sigma$ is justified by the question pending when it was played --- i.e $sqta \in \sigma$ (where $a$ points to $q$) implies $\pending(sqt) = sq$. 
\end{description}

Composition of strategies $\sigma: A \Rightarrow B,\tau: B \Rightarrow C$ is by parallel composition plus hiding of moves in $B$. 
$\sigma;\tau =  \{s \in L_{A\Rightarrow B}\ |\ \exists t \in L_{((A \Rightarrow B) \Rightarrow C)}.t\restrict A,B \in \sigma \wedge t \restrict B,C  \in \tau \wedge t \restrict A,C = s\}$. This  yields a Cartesian closed category $\G$  in which objects are arenas, morphisms from $A$ to $B$ are strategies on $A \Rightarrow B$, and identities are copycat strategies. It has a cartesian closed, wide subcategory $\G^{B}$ in which morphisms are well-bracketed strategies  \cite{AHM,LT}. 

\subsection{Semantics of $\L$}
We interpret $\L$ by exhibiting a strong monad on the category of ``pre-arenas'' obtained by applying the $\Fam(\_)$ construction (small co-product completion) to $\G^B$ (following \cite{AMV}). For any category $\C$, $\Fam(\G)$ is the category of \emph{set-indexed families} of objects of $\C$, which has as morphisms from $\{A_i \ |\ i \in I\}$ to $\{B_j\ |\ j \in J\}$, a pair $\langle f:I \rightarrow J,\{\psi_i:A_i \rightarrow B_{f(i)}\ |\ i \in I\}\rangle$ of a re-indexing function and a family of morphisms in $\C$. 

If $\C$ is Cartesian closed, then so is $\Fam(\C)$:
\begin{itemize}
\item $\Fam(\C)$ has co-products, given by the disjoint union of indexed families. 
\item $\Fam(\C)$ has products, --- $\{A_i \ |\ i\in\ I\} \times \{B_j \ |\ j \in j\}$ is $ \{A_i \times B_j \ | \ \langle i,j\rangle \in I \times J\}$. 
\item $\Fam(\C)$ has exponentials --- in particular, for any arena $B$,  exponentials of the singleton family $\{B\}$:  $\{A_i\ |\ i \in I\} \Rightarrow \{B\} = \{\Pi_{i \in I}(A_i \Rightarrow B)\}$. 
\end{itemize}
A justified sequence on $A \Rightarrow B$ is \emph{linear} if every initial move in $B$ justifies exactly one initial move in $A$. A
(thread-independent) strategy $\sigma:A \rightarrow B$ is linear if it contains some  non-empty sequence, and every sequence $s \in \sigma$ is linear.  A \emph{tree arena} is an arena with a unique root (initial move).
Let $\G_S$ be the subcategory of $\G^B$ consisting of tree arenas and linear strategies.     
\begin{proposition}The inclusion of $\G_S$ in $\Fam(\G^B)$ has a left adjoint $\Sigma\_$.
\end{proposition}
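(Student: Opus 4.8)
The plan is to exhibit the left adjoint explicitly. For each family $X=\{A_i\mid i\in I\}$ I would construct a tree arena $\Sigma X$ together with a unit morphism $\eta_X\colon X\to\{\Sigma X\}$ in $\Fam(\G^B)$ (writing $\{B\}$ for the singleton family, i.e.\ the image of a tree arena $B$ under the inclusion), and then verify that $\eta_X$ is universal: every $\Fam(\G^B)$-morphism $X\to\{B\}$ factors through $\eta_X$ via a unique linear strategy $\Sigma X\to B$. Since a morphism into a singleton family is forced to use the unique reindexing $I\to\{*\}$, it amounts to nothing more than a family $\{\phi_i\colon A_i\to B\}_{i\in I}$ of well-bracketed strategies, so the whole statement reduces to establishing a bijection
\[
\Fam(\G^B)(X,\{B\})\ \cong\ \G_S(\Sigma X,B),
\]
natural in $B$ and in $X$.

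For the construction, I would take $\Sigma X$ to be the tree arena obtained by adjoining a fresh root question $\star$, a fresh answer $\underline{i}$ of $\star$ for each $i\in I$, and grafting $A_i$ onto $\underline{i}$ (so $\underline{i}\vdash m$ for every $m\in M^I_{A_i}$), with $\lambda(\star)=Q$, each $\underline{i}$ an answer, and the labelling of $A_i$ unchanged. This has $\star$ as its unique initial move, hence is a tree arena, and the polarities alternate correctly ($\star$ Opponent, each $\underline{i}$ a Player answer, the roots of $A_i$ Opponent again). The unit $\eta_X$ has $i$-component the strategy on $A_i\Rightarrow\Sigma X$ that answers the initial $\star$ with the tag $\underline{i}$ and then plays copycat between the domain $A_i$ and the copy of $A_i$ sitting under $\underline{i}$; it is well-bracketed because $\underline{i}$ answers the pending question $\star$, and copycat is well-bracketed.

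Given $\phi=\{\phi_i\}$, I would define its transpose $\hat\phi\colon\Sigma X\to B$ on $\Sigma X\Rightarrow B$ as the strategy that responds to the initial move of $B$ with $\star$ (the unique root of $\Sigma X$) and, on Opponent's tag-move $\underline{i}$, proceeds exactly as $\phi_i$ on $A_i\Rightarrow B$, reindexing the justifier of the initial $A_i$-moves from the root of $B$ to $\underline{i}$. One checks that $\hat\phi$ is well-bracketed (the detour $\star\,\underline{i}$ answers $\star$ immediately, after which the bracketing is that of $\phi_i$) and linear (the unique root of $B$ justifies the unique root $\star$ of $\Sigma X$ and nothing else). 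The factorization $\eta_X;\hat\phi=\phi$ is a copycat-composition computation: composing $\eta_i$ with $\hat\phi$ and hiding the $\Sigma X$-moves ($\star$, $\underline{i}$, and the internal copy of $A_i$) returns $\phi_i$. For the inverse I would use that linearity forces any linear $\psi\colon\Sigma X\to B$ to open the unique root $\star$ in response to the root of $B$; its behaviour on each branch $\underline{i}$ is then precisely $\eta_i;\psi$, so $\psi=\widehat{\{\eta_i;\psi\}}$, giving the inverse assignment $\psi\mapsto\{\eta_i;\psi\}$ and hence the bijection.

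The conceptual content is routine, but I expect the fiddly part to be the bookkeeping for composition and naturality: verifying that parallel composition plus hiding genuinely identifies $\eta_X;\hat\phi$ with $\phi$ and, symmetrically, that $\widehat{\{\eta_i;\psi\}}=\psi$, requires tracking justification pointers through the hidden $\Sigma X$-component and confirming that well-bracketing and linearity survive the transpose in both directions. The most delicate point is pinning down that linearity is exactly the condition characterising the transpose --- that a well-bracketed $\Sigma X\to B$ is linear precisely when it makes a single use of the branch-choice $\star$ per initial $B$-move --- together with naturality in $X$, where $\Sigma$ on morphisms is induced from the universal property just established.
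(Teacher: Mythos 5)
Your construction of $\Sigma X$ is exactly the paper's lifted sum (fresh question root, an answer node per index, with $A_i$ grafted beneath), and your bijection $\Fam(\G^B)(X,\{B\})\cong\G_S(\Sigma X,B)$ is precisely the ``evident correspondence'' between linear sequences with the initial question--answer detour and plain sequences on $A_i\Rightarrow B$ that the paper invokes. So this is the same proof, merely unfolded via the unit and universal property where the paper states the hom-set correspondence directly.
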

\begin{proof}
The \emph{lifted sum} \cite{AMV} of a family of arenas $A = \{A_i \ |\ i \in I\}$ is the tree $\Sigma A$ with a single (question) root node, beneath which are answer nodes for each $i \in I$, beneath each of which is the arena $A_i$:
\begin{itemize}
\item$M_{\Sigma A} = (\coprod_{i \in I}(M_{A_i} + \{a_i\})) + \{q\}$
\item$\lambda^{QA}_{\Sigma A}(q) = Q, \lambda^{QA}_{\Sigma A}(a_i) = A \ i \in I,\ \ \lambda^{QA}_{\Sigma A}(\langle m, i \rangle) = \lambda_{A_i}(m)$
\item$* \vdash_{\Sigma A} q \vdash_{\Sigma A} a_i \vdash_{\Sigma A} \langle m,i\rangle$ and $\langle l,i\rangle \vdash \langle n, i\rangle$   where $m \in M^I_{A_i}$ and $l \vdash_{A_i} n$.
\end{itemize}

There is an evident correspondence between non-empty even-length  linear sequences on $A \Rightarrow \Sigma B$ and even-length sequences on $A \Rightarrow B$  yielding an  adjunction \begin{center}
{\Large $\G_S(\Sigma A,B) \over \Fam(\G^B)(A,\{B\})$}
\end{center} 
\end{proof}
Hence we have a (strong) monad on  $\Fam(\G^B)$ sending $A$ to the singleton family $\{\Sigma A\}$ \cite{AMV}, giving a semantics of $\L$. To extend this to a semantics of $\Lr$ it suffices to give the denotation of the non-functional part --- the  constant $\new_T: \var[T]$ --- as a  strategy
 $\cell_{A}:\Sigma(\Sigma A \times (A \Rightarrow \Sigma\1))$ defined in \cite{AHM}, which takes an argument of type $T$ and behaves as a reference cell initialized with that argument. This yields a computationally adequate semantics of $\Lr$, as proved in \cite{AHM}:
\begin{proposition}$M \Downarrow$ if and only if $\[M\] \not = \bot$. \end{proposition}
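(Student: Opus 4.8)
The plan is to prove the two implications separately: the forward direction $M\Downarrow \Rightarrow \[M\]\neq\bot$ by soundness of reduction, and the reverse (genuine adequacy) direction by a computability/logical-relations argument. Throughout I would work with the monad semantics of $\Lr$ fixed above, treating the denotation of a closed computation $M:\com$ as a global point of $\Sigma\[\com\]$, so that $\[M\]\neq\bot$ means precisely that the strategy answers the root question.

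First I would set up a denotation of configurations. Since the operational semantics of Table~\ref{ops2} runs terms in an environment $\E$ carrying a store, I would interpret a configuration $(M,\E)$ by binding each location constant $(\set(a),!a)$ to a copy of the cell strategy $\cell$ initialized with the currently stored value, and composing this with $\[M\]$. Two lemmas then suffice. The first is a routine substitution lemma, $\[M[V/x]\]=\langle\id,\[V\]\rangle;\[M\]$, by induction on $M$. The second is soundness of reduction: for every rule $(M,\E)\lra(M',\E')$, $\[M\]_\E=\[M'\]_{\E'}$. The $\beta$/product/sum/\lett\ rules follow from the substitution lemma and the monad laws ($\eta$ as left unit for Kleisli extension, associativity of $\mu$), while the three reference rules amount to checking that $\cell$ really does behave as a mutable cell — that a dereference returns the last written value and that $\new$ yields a fresh cell — which is exactly the defining property of $\cell_A$ established in \cite{AHM}. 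Since $\[\ski\]=\eta\neq\bot$ (the unit at the terminal object) and denotation is invariant under reduction, $M\Downarrow$, i.e. $(M,\varnothing)\lra^*(\ski,\E)$, gives $\[M\]=\[\ski\]_\E\neq\bot$.

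For the converse I would define a family of formal approximation relations by induction on types, relating semantic objects to closed terms: a relation $\lhd_T$ at each value type, and a relation between computations $\sigma$ of type $\Sigma\[T\]$ and closed computation terms $M:T$ whose governing clause is that whenever $\sigma\neq\bot$, then $M$ must reduce (in a store compatible with the semantic state) to some $[V]$ with $\[V\]\lhd_T V$. The heart of the argument is a fundamental lemma: every judgement $\Gamma\vdash M:T$ is related to its denotation under any related closing substitution, proved by induction on the typing derivation, checking that each semantic construct — the monad's $\eta$, $\mu$, strength $\tau$, application $\app$, the coproduct and product maps, and crucially the cell strategy interpreting $\new$ — preserves relatedness. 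Instantiating at $T=\com$ gives $\[M\]\lhd_\com M$, and since the only non-bottom inhabitant of $\Sigma\[\com\]$ corresponds to the reduct $\ski$, $\[M\]\neq\bot$ forces $M\Downarrow$.

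The main obstacle is the reference type $\var[T]$. Because $T$ may be any type — including function types and further reference types — the store is higher-order and mutable, so a naive inductive relation is neither well-founded nor stable under updates. I would therefore index the relations by a notion of possible world describing the current store shape, defining a Kripke logical relation monotone under world extension, and discharge the cell clause of the fundamental lemma by showing that each use of $\cell$ corresponds to an allocation extending the world. Reconciling this world-indexing with the \textbf{thread-independence} of strategies, and with the \emph{implicit} representation of state inside the games model (rather than an explicit store-passing monad), is the delicate point — and is precisely where the argument of \cite{AHM} does the real work.
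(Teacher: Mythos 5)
The paper gives no proof of this proposition at all: it is imported verbatim from \cite{AHM}, so the only meaningful comparison is with the argument given there. Your forward (soundness) direction is fine and is the standard one --- interpret a configuration by composing $\[M\]$ with copies of $\cell$ holding the current store contents, verify that each reduction rule of Table 2 preserves this denotation using the substitution lemma, the monad laws and the defining behaviour of $\cell$, and conclude from $\[\ski\]\neq\bot$.

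The gap is in the adequacy direction. A Kripke logical relation indexed by ``store shapes'' is precisely the construction that is known \emph{not} to go through naively for general references: a location may store a value of any type, including function and further reference types, so a world must carry semantic information about values whose relatedness is itself defined by quantification over worlds. This is a genuine circularity, not merely a failure of ``naive well-foundedness'', and resolving it needs additional machinery (step-indexing or recursively defined worlds) that you do not supply. More to the point, you defer this, ``the delicate point'', to \cite{AHM} --- but \cite{AHM} does not use logical relations for adequacy at all. The argument there replaces $\cell_A$ by approximants $\cell_A^n$ consisting of the plays with at most $n$ dereferencing moves, yielding approximate denotations $\[M\]_n$ with $\[M\]=\bigsqcup_n \[M\]_n$ by continuity; one then shows that reduction with an $n$-bounded store always terminates, and that $\[M\]_n\neq\bot$ implies convergence of the $n$-bounded reduction of $M$, hence $M\Downarrow$. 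This sidesteps the world/type circularity entirely. As it stands your plan attributes to the cited source a technique it does not contain and leaves the hardest step of the adequacy proof unresolved.
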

\subsection{Semantics of Continuations}
We may interpret call-with-current-continuation directly --- by identifying its denotation as a strategy, as  for new variable declaration --- but also indirectly, by \emph{CPS interpretation}. We recall the observation in \cite{LT} that relaxing the bracketing condition on strategies renders the lifted sum of games equivalent to a CPS construction. More precisely,  let   $U_C:\Fam(\G) \rightarrow \Fam(\G^B)$ be the identity-on-morphisms functor which acts on arenas by relabelling every answer as a question. Then there is an evident isomorphism of arenas: $U_C(\Sigma A) \cong  (\Sigma 0)^{(\Sigma 0)^{A}}$   yielding an equivalence of the lifting monad (on $\Fam(\G^C)$) to a CPS monad on $\Fam(\G)$ :
\begin{lemma}\label{cc} $\Sigma\cdot U_C \cong U_C\cdot (\Sigma 0)^{(\Sigma 0)^{\_}}$
\end{lemma}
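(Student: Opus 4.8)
The plan is to prove the stated natural isomorphism of functors $\Fam(\G) \to \Fam(\G^B)$ by reducing it to an explicit isomorphism of the underlying arenas, exploiting the fact that $U_C$ is the identity on morphisms. Since a strategy is literally a set of justified sequences and $U_C$ leaves these sequences untouched (it only relabels the ambient arena, which trivialises well-bracketing), once I have a move-bijection realising $\Sigma(U_C A) \cong U_C((\Sigma 0)^{(\Sigma 0)^A})$ naturally in $A$, the functor isomorphism will follow; compatibility with the two monad structures — which underlies the informal claim that relaxing bracketing identifies the lifting monad with the continuations monad — can then be checked separately against the units and multiplications.

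First I would unfold both sides on an object $A = \{A_i \mid i \in I\}$. On the left, $\Sigma(U_C A)$ is the singleton family whose arena has a question root $q$, a move $a_i$ beneath $q$ for each $i \in I$, and a copy of $U_C(A_i)$ (all of whose moves are questions) beneath each $a_i$. On the right I would apply the $\Fam$-exponential formula $\{A_i\} \Rightarrow \{\R\} = \{\Pi_{i \in I}(A_i \Rightarrow \R)\}$ twice, with $\R = \Sigma 0$ the one-move (question) arena: the inner $(\Sigma 0)^A$ is a forest of roots $r^{(i)}$ ($i \in I$) with $A_i$ grafted beneath each, the outer exponential adds a fresh question root $r'$ enabling all the $r^{(i)}$, and finally $U_C$ relabels every answer as a question, turning the bodies into $U_C(A_i)$. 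The required bijection matches the three layers: root $q \leftrightarrow r'$, selector $a_i \leftrightarrow r^{(i)}$, and the bodies $U_C(A_i)$ identically. Checking that this bijection preserves the enabling relation (a routine unwinding of the graft and lifted-sum definitions) and is natural in $A$ is then straightforward, since both functors act on a morphism by the same ``do nothing to the underlying plays'' recipe.

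The main obstacle — and the real content of the lemma — is the reconciliation of the question/answer labelling on the selector layer. In the lifted sum the selector $a_i$ is an answer, so it closes the pending root question and forces plays to be well-bracketed (a computation returns along exactly one summand, once); in the CPS game $(\Sigma 0)^{(\Sigma 0)^A}$ the corresponding move $r^{(i)}$ is the question-root of the inner copy of the response object $\R$, which may be left pending or re-invoked, modelling non-local jumps. The whole point of the relabelling performed by $U_C$ is to erase precisely this distinction, so I would concentrate the effort here: verifying that, after relabelling, the legal justified sequences of the two arenas are in exact correspondence, so that the equated strategies really are the same set of sequences, and that this correspondence commutes with composition — which is what makes the isomorphism a morphism of monads. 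The likely pitfall is the bookkeeping of the doubled polarity reversal in the double exponential, which must return the selector layer to Player and the bodies $A_i$ to their original polarity, matching the lifted sum exactly.
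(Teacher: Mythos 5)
Your proposal is correct and coincides with the paper's own (essentially unwritten) argument: the paper merely asserts that there is an ``evident isomorphism of arenas'' $U_C(\Sigma A) \cong (\Sigma 0)^{(\Sigma 0)^{A}}$, and your three-layer move bijection (root $\leftrightarrow$ root, selector $a_i \leftrightarrow r^{(i)}$, bodies identical), together with the observation that the relabelling performed by $U_C$ erases the answer-versus-question discrepancy on the selector layer, is precisely that isomorphism spelled out. Since $U_C$ is the identity on plays and the bijection preserves enabling, the naturality and monad-compatibility checks go through exactly as you indicate.
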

Hence we have a strong monad $\Sigma\_$ on $\Fam(\G)$, equivalent to the one-move-game continuations monad --- and thus a semantics of $\L$ in $\G$, such that the inclusion functor $J:\G^B \rightarrow \G$ preserves meanings of $\Lr$ terms. 

For any game $A$, the isomorphism from $(A \Rightarrow \Sigma 0) \Rightarrow \Sigma 0$ into $\Sigma A$ yields a map  $\callc_{A,B}:((A \Rightarrow \Sigma B)   \Rightarrow \Sigma A) \rightarrow  \Sigma A$ with which to interpret $\callcc$. Concretely, this responds to the initial question ($\lbl$) with a  Player question ($\ok$),  and to its answer or the  subsequent question (${\mathsf{throw}}$)  with an answer to the initial question ($\caught$)  (note that this violates the bracketing condition), and  thereafter plays copycat between moves hereditarily enabled by  ($\ok$), and those hereditarily enabled by $\caught$. A typical play of $\callc_{\1,\0}$ is as follows: 

{$$\xymatrix@R=0pt@C=0pt{(\Sigma 0 & \Rightarrow & \Sigma\1) & \Rightarrow & \Sigma\1 \\ 
& & & & \lbl \\
& & \ok \ar@/^1pc/@{-->}[rru] \\
\jump  \ar@/^1pc/@{-->}[rru] \\
& &  & &\caught }$$

We may prove that our model is sound by using its  characterization as a CPS interpretation --- either by using an equational theory for CPS models to verify soundness of the reduction rules directly (see \cite{LT} for details), or by formalizing it as a CPS translation of $\Lc$ into $\Lr$ and showing that it is sound. We give a proof of the latter form in the Appendix, showing that:
\begin{proposition}Interpretation of programs as strategies in $\G$ is sound and computationally adequate: $M\Downarrow$ if and only if $\[M\] \not = \bot$. 
\end{proposition}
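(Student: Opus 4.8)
The plan is to transport the computational adequacy of the references-only fragment $\Lr$ in $\G^B$ (the Proposition established above, from \cite{AHM}) to the full fragment $\Lc$ in $\G$ along a Continuation-Passing-Style translation. First I would fix a syntactic CPS transform $(\cdot)^\dagger$ sending types and terms of $\Lc$ to $\Lr$: on types it takes a function type $S \to T$ to $S^\dagger \to \neg\neg T^\dagger$, where $\neg X = X \to \emp$ is negation into the answer type $\emp$ (whose lift $\Sigma 0$ is the one-move game appearing in Lemma~\ref{cc}); on terms it sends a computation $M:T$ to a term consuming a continuation $k:T^\dagger \to \emp$ for its result, leaves references in place while threading the continuation through their reads and writes, and maps $\callcc$ to the standard clause that hands the reified current continuation to its argument. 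A program $M:\com$ is then observed by supplying $M^\dagger$ with the top-level continuation $k_0$, producing an $\Lr$-term whose convergence matches $M\Downarrow$ (the observable being realized through the top-level-continuation machinery of the $\#$-calculus).

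The first main step is the semantic factorization: $\[M\]_{\G}\neq\bot$ if and only if $\[M^\dagger\]_{\G^B}\neq\bot$. This is proved by induction on typing derivations, and is essentially Lemma~\ref{cc} applied pointwise. The lifting monad $\Sigma$ on $\Fam(\G)$ used to interpret $\Lc$ is carried by the isomorphism of Lemma~\ref{cc} to the CPS monad $(\Sigma 0)^{(\Sigma 0)^{\_}}$ on $\Fam(\G^B)$, with its unit $\eta$, multiplication and strength $\tau$ transported along it; since the inclusion $J:\G^B \to \G$ preserves the meanings of $\Lr$-terms, the value, let, product, sum and reference ($\cell$) clauses agree on the two sides. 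The only genuinely new case is $\callcc$: here I must verify that the concrete strategy $\callc_{A,B}$ described above --- which answers $\lbl$ by $\ok$, responds to $\thro$ by the (non-bracketed) answer $\caught$, and then plays copycat --- is exactly the image under Lemma~\ref{cc} of the $\Lr$-denotation of the CPS clause for $\callcc$. This is where the correspondence between control pointers and the double-negation plumbing does the real work.

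The second main step is operational: $M\Downarrow$ if and only if $M^\dagger\Downarrow$, at the level of the reduction relation of Table~\ref{ops2} (restricted to the exception-free rules). I would prove a simulation lemma relating a reduction sequence of $M$ in the environment semantics to one of $M^\dagger$ in $\Lr$, factoring through administrative reductions. The guiding invariant is that an evaluation context $E[\_]$ is translated to a reified continuation, so that the rule $E[\callcc\ V]\lra E[V\ \lambda x.\#E[x]]$ is mirrored by $V^\dagger$ being applied to the current continuation, and the rule $E[\#(M)]\lra M$ corresponds to discarding the ambient continuation and resuming at top level; the reference rules are simulated directly since the store passes through the translation unchanged. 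Determinacy of both reduction systems then upgrades the simulation to the two-directional convergence equivalence.

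Chaining the three facts gives the result: $M\Downarrow$ iff $M^\dagger\Downarrow$ (operational simulation), iff $\[M^\dagger\]_{\G^B}\neq\bot$ (adequacy of $\Lr$), iff $\[M\]_{\G}\neq\bot$ (semantic factorization). The step I expect to be the main obstacle is the operational simulation lemma: obtaining a clean, two-sided correspondence between the small-step control semantics --- with its explicit context capture via $\#E[x]$ and its resetting of the top-level continuation --- and the $\beta$-reductions of the CPS image, while preventing the administrative redexes from obscuring the match, is delicate, and the interaction of captured continuations with the store makes the $\callcc$ case the most demanding part of the bookkeeping.
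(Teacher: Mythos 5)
Your proposal is correct and coincides with the paper's own argument: the Appendix proves this proposition by exactly the same three-step route --- a CPS translation $(\_)^C$ of $\Lc$ into $\Lr$, an operational correspondence $M\Downarrow$ iff $M^C\spc\tau \longrightarrow \tau\spc ()$, and a semantic correspondence $U_C(\[\Gamma \vdash M:T\]);\phi_T = \[M^{C}\]$ derived from Lemma~\ref{cc} --- before invoking adequacy of the $\Lr$ model. The only cosmetic difference is that you state the function-type clause in curried form $S^\dagger \rightarrow \neg\neg T^\dagger$ where the paper uses the uncurried $(S^C \times (T^C \rightarrow \emp)) \rightarrow \emp$, which it identifies with your form in the proof of the semantic correspondence.
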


\section{Control Strategies}
We now extend the game semantics of $\Lc$ with continuations and exceptions, to interpret $\Lce$. We retain the interpretation of $\L$-types as (families of) arenas, but change the notion of strategy by  adding control-flow information to justified sequences in the form of ``control pointers''. (These were introduced in  \cite{Lli} in the context of interpreting a call-by-name language without first-class continuations: here, we relax the \emph{weak bracketing condition} imposed in that model.  
\begin{definition}\label{cs} A \emph{control sequence}  $s$ over an arena $A$ is an alternating justified  sequence $|s|$ over $A$, together with a \emph{control pointer} from each question  move  in $|s|$  either to a unique preceding question (or else to a distinguished root token $*$) --- such that Opponent moves point to  Player moves or $*$ and Player moves point to Opponent moves. 
\end{definition}
We write $C_A$ for the set of  control sequences over the  arena $A$. A \emph{control strategy} on $A$ is  a non-empty, even-prefix-closed set of even-length control sequences in $C_A$,  satisfying the determinacy and thread-independence conditions. 

In order to use our definition of composition for control strategies, we need to define the restriction operator on control sequences to replace ``dangling'' control pointers, by following back pointers  to hidden moves until an unhidden move is reached. Accordingly, we define the set of  \emph{open questions} of a control sequence  as follows:\\
$\open(\varepsilon) = \{\}$,\\
$\open(sqta) = \open(s)$, if $a$ is an answer\\ 
$\open(sqtq') = \open(sq) \cup \{sqtq'\}$ if $q'$ is a question with  a control pointer to $q$,\\ 
$\open(sq) = \{q\}$, if $q$ points to $*$.\\

We extend the restriction operation to control sequences by requiring that every move in  $s\restrict B$ points to the most recent preceding open move which is in $B$ (if any). 
With this definition of restriction, the original proofs of  well-definedness and associativity of parallel composition plus hiding  \cite{McCT} extend straightforwardly to control strategies.
 

To form a category, we also need to define identity morphisms (and other copycats) as control strategies.  Say that a control sequence  $s$ satisfies  (player) \emph{control locality} if every Player question in $s$ points to the pending question: let $\Loc_A$ be the set of control sequences over $A$ which satisfy this condition. Given a  strategy on $A$, we may define a local control strategy $\widehat{\sigma}$  on $A$ by taking all player-local sequences which correspond to sequences in $\sigma$ when pointers are ignored  i.e. $\widehat{\sigma} =  \{s \in Loc_{A}\ |\ |s| \in \sigma\}$. (In other words, by decorating sequences in $\sigma$ by adding control pointers from each Opponent question to some Player question, and from each Player question to its pending  question.)
 We define the identity control strategy to be $\widehat{\id_A}$ (and similarly for the other copycat strategies giving cartesian closed structure).   So  we may define a cartesian closed category $\CG$ in which objects are arenas, and morphisms from $A$ to $B$ are  control strategies on $A \rightarrow B$. 

We also observe that:
\begin{itemize}
\item The operation $\widehat{\_}$ is not functorial: the  arenas $\Sigma\1$ and $\Sigma\0 \Rightarrow \Sigma\0$ are isomorphic in $\G$ but not in $\CG$: the  composition of the images of these isomorphisms under $\widehat{\_}$ is not $\widehat{\id_{\Sigma\0 \Rightarrow \Sigma\0}}$.
 \item $\widehat{\_}$ \emph{is} functorial on well-bracketed strategies:  there is a faithful, identity-on-objects functor $J:\G^B \rightarrow \CG$ sending $\sigma:A \rightarrow B$ to $\widehat{\sigma}$.
\end{itemize}

\subsection{Semantics of Exceptions}
$\CG$ has structure with which to  model $\L$ --- (it is a CCC with a strong lifted sum monad $\Sigma\_$ on $\Fam(\CG)$. The functor $J:\G^B \rightarrow \CG$ preserves all of this structure, and hence the meaning of $\L$-terms. We interpret new reference declaration, and  call-with-current-continuation by decorating the corresponding underlying strategies with control pointers: i.e.
\begin{itemize}
\item $\[\new_T\]_C = \widehat{\cell_{\[T\]}}$
\item $\[\callcc\]_C = \widehat{\callc}$ 
\end{itemize}  
Since $\cell$ is well-bracketed, it lies within the domain of the functor  $J$, which therefore preserves the meaning of $\Lr$-terms by definition. 
 As we have noted, $\widehat{\_}$ is not functorial on non-well-bracketed strategies, and hence does not preserve the meaning of $\Lc$-terms in general.

\paragraph{Exceptions}
We interpret new-exception declaration as a  new-exception strategy $\xcell_C:\Sigma((\Sigma 0 \Rightarrow \Sigma\1) \times (\Sigma 0))$. This was   defined  in a weakly-bracketed setting in \cite{Lli}: it   relies  on control pointers to determine the current exception handler.
Its behaviour can be described as follows:
\begin{itemize}
\item Answer the initial (Opponent) question.
\item If Opponent plays  the initial move on the left  ($\mathsf{try}$)  then respond with a question ($\ok$).
\item If Opponent plays the initial move on the right ($\thro$), then answer the most recent open initial question on the left ($\caught$).  If there are no such moves,  do nothing, representing divergence caused by an uncaught exception.
\end{itemize}
In other words, $\xcell_C$ consists of the legal control sequences of the form $qa(((\mathsf{try}\ok)^*(\thro\caught)^*)^*$ such that each $\caught$ move is justified by the most recent open $\mathsf{try}$ move. Here is a typical play: 
\begin{center}
$$\xymatrix@R=-1pt@C=5pt{ \Sigma & (\Sigma\0 & \Rightarrow & \Sigma\1 & \times & \Sigma\0) \\ 
q\\
a\\
&& & {\mathsf{try}}  \ar@/_1pc/@{-->}[lllu] \\
&\ok \ar@/^1pc/@{-->}[rru] \\
& & &  & \vdots \\
&& & & & \raze  \ar@/^1pc/@{-->}[lllluu] \\
& & & \caught \ar@/_1pc/@{-->}[uuuu]}$$
\end{center}
Proving soundness for this model directly is difficult due to the implicit nature of its representations of continuations and exceptions. Instead, we will provide  an alternative characterization of $\xcell$ by relating it to an exceptions monad.

\section{A Monadic Effect Semantics}
In this section, we construct  a semantics of $\Lce$ corresponding to  exceptions and continuation-passing-style interpretation. This has the advantage of relating these control effects to well-understood structure, at the cost of a less direct and concrete interpretation of terms.  Observe  that since  the  (strong) exceptions monad $\_ + E$ distributes over any other monad, we have a monad $\Sigma(\_ +1)$ on $\Fam(\G)$. By Lemma \ref{cc}, this is equivalent to the  exceptions-and-continuations monad $\R^{\R^{\_ + E}}$, for the answer object $\R = \Sigma0$.

In fact, we will describe the semantics of $\Lce$ in a category of ``exception arenas'' in which raising and propagating the global exception is represented by playing explicit ``exception moves''. The lifting monad in this model is equivalent to the above exceptions-and-continuations monad, giving a route to establishing its soundness, whilst it may be related to the control games interpretation by replacing runs of exception moves with control pointers.
\begin{definition}
An exception arena is a pair $(A,e_A)$ of an arena together with a function $e_A:\{m \in M_A\ |\ \lambda(m) = Q\} \rightarrow \{m \in M_A\ |\ \lambda(m) = A\}$ associating each question $q$ with a unique ``exception answer'', which is a child of $q$  --- i.e  $q \vdash e_A(q)$ --- and a leaf of $A$.  
\end{definition}
Exception moves correspond to a single, global exception: playing an exception answer in response to a non-exception move corresponds to \emph{raising} this exception, playing the pending exception answer in response to an exception move corresponds to \emph{propagating} it, and playing  a non-exception move in response to an exception move corresponds to \emph{handling} it.

Observe that if $(A,e_A)$ and $(B,e_B)$ are exception arenas, then $(A \Rightarrow B,\coprod_{m \in M^I_B} e_A +e_B$ and $(A \times B, e_A + e_B)$ are exception arenas. Hence we may define:
\begin{itemize}
\item a Cartesian closed category $\G^E$, which has exception-arenas as objects and unbracketed strategies on $A \Rightarrow B$ as morphisms from $A$ to $B$. 
\item A fully faithful (identity on morphisms)  cartesian closed functor $U_E:\G^E \rightarrow \G$ which forgets the exception answer labelling.
\end{itemize}
Given a family of exception-arenas  $A = \{(A_i,e_i) \ |\ i \in I\}$, define $\Sigma_E A = (\Sigma(\{A_i \ |\ i \in I\} +1), \coprod_{i \in I}e_i \cup \{q,\inr(a)\})$ --- i.e. the exception arena given by extending $\Sigma\{A_i \ |\ i \in I\} $ with an additional exception answer move $\inr(a)$ to the initial question.
 By definition, we have:
\begin{lemma}\label{ff}$U_E(\Sigma_E B) =  \Sigma (U_E(B) + 1)$. 
\end{lemma}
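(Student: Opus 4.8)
The plan is to prove this identity of arenas directly by unfolding both sides of the relevant definitions, since the only content of the lemma is that forgetting the exception-answer labelling commutes with the two lifted-sum constructions. I expect no substantive obstacle: as the text signals with the phrase ``by definition,'' the statement is essentially a bookkeeping check that the constructions $\Sigma_E$, $\Sigma$ and $U_E$ have been set up compatibly.

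First I would compute the left-hand side. By the definition of $\Sigma_E$, the exception arena $\Sigma_E B$ has underlying arena $\Sigma(\{B_i \mid i \in I\} + 1)$ together with the exception labelling $\coprod_{i \in I} e_i \cup \{(q,\inr(a))\}$. Since $U_E$ is the functor that forgets precisely this exception-answer labelling while leaving the underlying arena untouched, $U_E(\Sigma_E B)$ is exactly the arena $\Sigma(\{B_i \mid i \in I\} + 1)$. Next I would compute the right-hand side: applying $U_E$ memberwise to the family $B = \{(B_i,e_i) \mid i \in I\}$ yields the family of underlying arenas $\{B_i \mid i \in I\}$, with the index set $I$ unchanged because $U_E$ acts as the identity on indices and only discards each $e_i$. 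Forming $U_E(B) + 1$ then gives $\{B_i \mid i \in I\} + 1$, and applying the ordinary lifted sum $\Sigma$ produces $\Sigma(\{B_i \mid i \in I\} + 1)$. Both sides therefore reduce to the same forest construction applied to the same augmented family, so they are literally equal.

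The one point requiring a moment's care — and the closest thing to a difficulty here — is verifying that the extra ``$+1$'' summand is introduced at the same stage on the two sides. On the left it appears inside the definition of $\Sigma_E$, before the underlying arena is extracted by $U_E$; on the right it is adjoined to $U_E(B)$ before $\Sigma$ is applied. These coincide precisely because $U_E$ does not alter the indexing set, so the additional exception-answer component $\inr(a)$ attached to the root question $q$ in $\Sigma_E B$ is matched exactly by the root answer arising from the ``$+1$'' index in $\Sigma(U_E(B) + 1)$. Since the lemma asserts a genuine equality (not merely an isomorphism, as in Lemma~\ref{cc}), this alignment of the two constructions is all that is needed, and the proof amounts to the verification just sketched.
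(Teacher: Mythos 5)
Your proof is correct and follows the same route as the paper, which simply asserts the identity ``by definition'': unfolding $\Sigma_E$, $U_E$ and $\Sigma$ on both sides yields the same arena $\Sigma(\{B_i \mid i \in I\} + 1)$, exactly as you compute. The extra care you take over where the ``$+1$'' summand enters is a reasonable elaboration of the bookkeeping the paper leaves implicit.
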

Hence by full faithfulness of $U_E$, we therefore have a strong monad $\Sigma_E$ on $\Fam(\G^E)$ such that $U_E\cdot \Sigma_E = \Sigma(\_ +1) \cdot U_E$, giving a semantics for the type-theory $\L$ in $\G^E$. Note that by Lemma \ref{cc}, $\Sigma_E$ is equivalent to the exceptions-with-continuations monad $\R^{\R^{\_ + 1}}$, where $\R$ is the one-move game --- i.e. $ U_C\cdot U_E \cdot \Sigma_E \cong \R^{\R^{\_ + 1}}  \cdot U_C \cdot U_E$.

We may interpret continuations and references in this model by adding exception-answers to our existing denotations for these features.    
Given an exception-arena $A$, let $K(A)$ be the arena obtained by erasing all of the exception-answers in $A$ --- i.e. $M_{K(A)} = M_{A} - \{e(q)\ |\ \lambda(q) = Q\}$. Say that an even-length  legal sequence $s$ on $A$ is exception-local if:
\begin{itemize}
\item Player always propagates exception moves: if $te(q)m \sqsubseteq s$ is even-length, then $m = e(q')$, where $q'$ is the pending question of $te(q)$.
\item Player never raises an exception: if $tmn \sqsubseteq s$ is even-length, and $m$ is not an exception-move, then $n$ is not an exception move.
\end{itemize}
Given an exception-local sequence, let $\overline{s}$ be the legal sequence on $K(A)$ obtained by erasing exception moves in $s$. Let the \emph{exception-completion} of a strategy $\sigma$ on $K(A)$,  be the set $\widetilde{\sigma}$  of exception-local sequences on $A$ such that $\overline{s} \in \sigma$. (Note that this is deterministic and thread-independent.) Hence we may define the denotation of $\callcc$ and $\new$ as the exception-completions of their denotations in $\G$.

 To interpret exception declaration and handing, first note that we have evident \emph{raise} and \emph{handle} strategies on $1 \rightarrow \Sigma_E 1$ and $\Sigma_E 0  \rightarrow \Sigma_E 1$ which raise and handle the \emph{global} exception by playing the exception-answer to the initial question and  respond to Opponent's playing of an exception answer by playing a ``non-exception-answer'', respectively. The interpretation of  $\nxn:((\1 \rightarrow 0) \rightarrow \1) \times (\1 \rightarrow \0)$ as a strategy  $\xcell_E: ((\Sigma_E \0 \Rightarrow \Sigma_E \1) \times  \Sigma_E\0$ combines these behaviours with local state: the handle method handles the global exception if the raise method has been used to raise a global exception which has not yet been handled, and propagates it otherwise. Formally, define $\xcell_E$ to consist of all plays of the form $qa ({\mathsf{handl}}\ok)^* (\raze\cdot e[\raze]))^*(e[\ok]\caught)^*$ which are Player-well-bracketed. Example plays are given in Figure \ref{exmpl}. 
\begin{figure}
\begin{center}
\begin{tabular}{c c c}
{$$\xymatrix@R=-1pt@C=5pt{ \Sigma & (\Sigma_E\0 & \Rightarrow & \Sigma_E\1 & \times & \Sigma_E\0) \\ 
q\\
a\\
&& & {\mathsf{try}}   \\
&\ok  \\
& & &  & \vdots \\
& & & & & \raze   \\
& & & & &  e(\raze)\\
&  e(\ok)\\
& & & \caught}$$} &\ \ \ \ \  &
{$$\xymatrix@R=-1pt@C=5pt{ \Sigma_E & (\Sigma_E\0 & \Rightarrow & \Sigma_E\1 & \times & \Sigma_E\0) \\ 
q\\
a\\
& & & {\mathsf{try}}  \\
&\ok  \\
& & &  & \vdots \\
&  e(\ok)\\
& & & e({\mathsf{try}})}$$}  
\end{tabular}
\end{center}
\caption{Plays of $\xcell_E$}\label{exmpl}
\end{figure}

\subsection{Soundness}
We may prove soundness and adequacy of our exceptions monad model by showing that it corresponds to an \emph{exception-passing-style} translation  $(\_)^E$  from $\Lce$ to  $\Lc$. This acts on types as follows:
\begin{itemize}
\item $\emp^E = \emp$, $\com^E = \com$,
\item $(S \times T)^E = S^E \times T^E$,
\item $(S + T)^E = S^E + T^E$,
\item $(S \rightarrow T)^E = S^E \rightarrow (T^E + \com)$.
 \end{itemize}

Translation of computations $x_1:S_1,\ldots,x_n:S_n \vdash M:T$ as computations $x_1:S_1^E,\ldots,x_n:S^E_n \vdash_c M^E:T^E + \com$ and values $x_1:S_1,\ldots,x_n:S_n \vdash V:T$ as values  $x_1:S_1^E,\ldots,x_n:S^E_n \vdash_E V^E:T^E$  is given in Table 4. The translation of $\Lc$  arises straightforwardly from  the strong monad structure, so we focus on the interpretation of \emph{local} exceptions --- specifically, the new-exception declaration. Right injection and pattern matching may be used to define evident operations  which raise and handle the single \emph{global} exception, respectively. But how can we use these to construct an object whose methods raise and handle a \emph{local} exception? The answer is: we use the local state in our underlying model/metalanguage. Our exception has as its  internal state a Boolean variable $e$ acting as a flag. The  raise method for the local exception object sets $e$ and raises the global exception. The handle method for the object handles the global exception, tests $e$ and resets it if it is set (i.e. $e$ had been raised and has now been handled) or re-raises the global exception if it is not set (i.e. some other exception was raised and now needs to be propagated). So we have two methods:
\begin{itemize}
\item $\rais(e) = \lambda z.e:=\tru;\inj_2(()):(\com \rightarrow \emp)^E $  
\item ${\mathtt{handle}}(e) = \lambda f.(f\spc ());\If\spc \deref(e)\spc \then \spc (e:=\ff;\inj_2(())) \spc \el\spc \inj_2(()):((\com \rightarrow \emp) \rightarrow \com)^E$
\end{itemize}
New-exception declaration simply aggregates these methods and hides the internal state $e$.
\begin{table}

\begin{itemize}
\item $(x)^E = x$ 
\item  $[V]^E = [\inj_1(V^E)]$ 
\item $(\lett\spc M = x\spc \inn\spc N)^E = \lett\spc M^E = y\spc \inn \spc \cse\spc y\spc \as\spc \inj_1(x).N^E|\inj_2(x).\inj_2(())$
\item  $(\lambda x.M)^E = \lambda x.M^E$,\ \ \  $(U\spc V)^E = U^E \spc V^E$ 
\item $(\match\spc V\spc \as\spc (x,y).M)^E = \match\spc (x,y)\spc \as\spc V^E\spc \inn \spc M^E$, \ \ \ \ $\langle U,V\rangle^E = \langle U^E,V^E\rangle$,
\item  $()^E = ()$,\ \ \ \ $\void(V)^E = \void(V^E)$
\item $\inj_i(V)^E = \inj_i(V^E)$, \ \ 
 $(\cse\spc V\spc \spc \as\spc \inj_1(x).M|\inj_2(x).N)^E = \cse\spc V^E\spc\as\spc \inj_1(x).M^E|\inj_2(x).N^E $ 
\item $\callcc(V)^E = \callcc(\lambda k.V^E\spc \lambda x.k\spc \inj_1(x))$  
\item $\new^E = \lambda ().\new\spc a\spc \inn\spc [\inj_1(a)]$ 
\item $\nxn^E =\lambda ().\new\spc x:=\ff.[\inj_1(\langle {\mathtt{handle}}(e),\rais(e)\rangle)]$
\end{itemize}
\caption{Exception-passing translation}
\end{table}
We establish that the interpretation of local exceptions via translation into $\Lc$ is sound.
\begin{lemma}$M \Downarrow$ if and only if $M^E \Downarrow$.  
\end{lemma}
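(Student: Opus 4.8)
The plan is to read the Lemma as an operational correspondence between the small-step semantics of $\Lce$ and that of its exception-free fragment $\Lc$ induced by $(\_)^{E}$, taking $M^{E}\Downarrow$ to mean that $M^{E}$ converges to $[\inj_1(())]$; under this reading an \emph{uncaught} source exception --- which is stuck, hence divergent, in $\Lce$ --- is correctly matched by the target converging to the \emph{distinct} value $[\inj_2(())]$. First I would extend $(\_)^{E}$ from $\Lce$ to the runtime calculus $\mathcal{L_{CE}^{\#}}$ and to whole configurations: the store constants $\set(a),!a$ and the operator $\#$ translate to themselves, the runtime exception constants $\throw(e),\catch(e)$ translate to the flag-based operations $\rais(e)$ and $\handle(e)$ acting on a Boolean cell associated with $e$, and an environment $\E=(\loc,\xs,\ex)$ translates to $\E^{E}$, which stores the translated values together with one fresh cell per exception name in $\ex$, all holding $\ff$, under a fixed bijection between exception names and flag cells. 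Alongside this I would prove the two structural lemmas on which everything rests: a substitution lemma $(M[V/x])^{E}=M^{E}[V^{E}/x]$ (with its value analogue), by mutual induction, and an evaluation-context decomposition lemma producing, for each source context $E[\_]$, a target evaluation context $\mathbb{E}[\_]$ with $(E[R])^{E}\lra^{*}\mathbb{E}[R^{E}]$; the only non-immediate case is $\catch(e)\lambda x.\_$, whose translation applies $\handle(e)$ and begins evaluating its argument, so that $\mathbb{E}$ carries the pending flag-test $;\If\ \deref(e)\ \then\ \ldots$ as its outer frame.

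With this machinery I would prove a forward simulation: each source step $(M,\E)\lra(M',\E')$ is matched by a target run from $(M^{E},\E^{E})$ to $((M')^{E},(\E')^{E})$, maintaining the invariant that every configuration corresponding to a source one has all flags reset. The $\beta$, pairing, case and let-value rules use the substitution lemma, the let-translation contributing one extra $\inj_1$-case step; the reference rules and $\nxn$ match the store- and cell-allocations performed by $\new^{E}$ and $\nxn^{E}$, extending the bijection; the $\callcc$ and $\#$ rules are simulated by the target's own $\callcc$ and $\#$ together with the decomposition lemma and the $\inj_1$-wrapping in $\callcc(V)^{E}$. The interesting case is the single source catch-step $E[\catch(e)\lambda x.E_e[\throw(e)\ ()]],\E\lra E[[()]],\E$, which unfolds into a whole target run: $\rais(e)$ sets the flag of $e$ and returns $[\inj_2(())]$; this value propagates outward through the $\inj_2$-branches of the let-translations comprising $E_e$ --- and through any intervening $\handle(h)$ with $h\neq e$, which re-propagates because its flag is reset --- until it reaches $\handle(e)$, whose flag is now set, so it resets the flag and returns the normal value $[\inj_1(())]$, landing as $\mathbb{E}[[\inj_1(())]]$ and restoring the invariant. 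I would isolate the needed fact as a \emph{propagation lemma}: for any $E_e$ with no $\catch(e)$ in its spine, running $\mathbb{E}_{e}[[\inj_2(())]]$ with $e$'s flag set reduces to $[\inj_2(())]$ at the top of $\mathbb{E}_{e}$, consumed only by the matching $\handle(e)$; this is precisely what enforces innermost-handler semantics, since the first $\handle(e)$ reached resets the shared flag.

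The reflection direction then follows from determinism of both semantics. Forward simulation already gives $M\Downarrow\Rightarrow M^{E}\Downarrow$, so for the converse I would argue contrapositively. Appealing to a progress property for closed well-typed computations (which are either the value $[V]$, reducible, or stuck at an uncaught $\throw$), $M\not\Downarrow$ falls into two cases: an infinite source reduction, which by forward simulation induces an infinite target reduction (each source step contributing at least one, and only finitely many, target steps); and a source configuration stuck at an uncaught $\throw(e)\ ()$, which the propagation lemma carries to the top-level value $[\inj_2(())]\neq[\inj_1(())]$. In either case $M^{E}\not\Downarrow$, and combining the two implications yields the biconditional.

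The main obstacle is the catch-step and its interaction with first-class continuations. One must verify that no $\callcc$ can fire during a single propagate-and-catch sequence --- propagation is only a cascade of $\inj_2$-case analyses and else-branch flag tests, so the flag of $e$ is set and reset atomically with respect to continuation capture --- and hence that continuations are only ever captured and invoked in clean, all-flags-reset states, so that the single global flag per exception never goes stale across a jump. Checking that this invariant survives the combination of local exceptions and first-class continuations --- the very interaction this paper is concerned with --- is where the real work lies; once the propagation lemma and the reset invariant are in place, the remainder is bookkeeping.
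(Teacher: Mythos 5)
Your proposal is correct and follows essentially the same route as the paper: the paper's proof consists precisely of the remark that the key step is to decompose the atomic catch/throw rule into small steps propagating the exception value past each $\lett$-frame and each non-matching handler, which is exactly your propagation lemma. The surrounding scaffolding you supply --- reading $M^E \Downarrow$ as convergence to $[\inj_1(())]$ so that uncaught exceptions land on the distinct value $[\inj_2(())]$, the all-flags-reset invariant across $\callcc$ jumps, and the reflection direction via determinism and progress --- correctly fills in what the paper leaves implicit.
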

\begin{proof}
The key step is to  break the exception-propagation rule down to propagate exceptions past each  $\lett$-context and non-matching handler.  
\end{proof}
We have already noted that the forgetful functor $U_E$ sends $\Sigma_E$ to the (lifted) exceptions monad $\Sigma(\_ + 1)$. This correspondence extends  to the interpretation of $\Lc$-types and terms.
\begin{lemma}For any  $\L$-term, $\Gamma \vdash M:T$, $U_E(\[\Gamma \vdash M:T\]) = \[M^{E}\]$.
\end{lemma}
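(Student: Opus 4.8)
The plan is to argue by structural induction on the typing derivation of the $\L$-term $\Gamma \vdash M:T$, reducing the statement to the claim that the forgetful functor $U_E$ transports the \emph{entire} strong-monad structure of $\Sigma_E$ on $\Fam(\G^E)$ to that of the lifted exceptions monad $\Sigma(\_ + 1)$ on $\Fam(\G)$, of which the translation $(\_)^E$ is precisely the syntactic image. Since $M$ is an $\L$-term, the only constructors to consider are those of the core computational $\lambda$-calculus --- no clauses for $\new$, $\nxn$ or $\callcc$ are needed --- so the induction runs over the rules of Table~1 and their interpretations in Table~3. Before treating terms I would first establish the type-level identity $U_E\[T\] = \[T^E\]$ (and hence $U_E\[\Gamma\] = \[\Gamma^E\]$), so that both sides of the equation are parallel morphisms of $\G$.

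The type-level induction uses only that $U_E$ is a cartesian closed functor together with Lemma~\ref{ff}. The product, coproduct and unit cases are immediate from preservation of the corresponding universal constructions, matching $(S \times T)^E = S^E \times T^E$, $(S + T)^E = S^E + T^E$, $\com^E = \com$ and $\emp^E = \emp$. The function-type case is the one that pins down the shape of the translation: in $\G^E$ we have $\[S \rightarrow T\] = \[S\] \Rightarrow \Sigma_E\[T\]$, so applying $U_E$ and Lemma~\ref{ff} yields $U_E\[S\] \Rightarrow \Sigma(U_E\[T\] + 1)$, which by the induction hypothesis equals $\[S^E\] \Rightarrow \Sigma(\[T^E\] + \[\com\]) = \[S^E \rightarrow (T^E + \com)\] = \[(S \rightarrow T)^E\]$. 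Thus the extra summand $\com$ in the clause $(S \rightarrow T)^E = S^E \rightarrow (T^E + \com)$ is exactly the trace left by the monad $\_ + 1$ once the exception-answer labelling is forgotten.

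For terms, the purely cartesian-closed cases (variable, pairing, $\match$, $\cse$, $\void$, $\inj_i$, $\lambda$-abstraction and application) are routine: $U_E$ preserves the relevant products, coproducts and exponentials, the translation is homomorphic on these constructors, and the result follows from the induction hypotheses. The content is concentrated in the two clauses governing the monad. For the unit I would check that $U_E$ sends the $\Sigma_E$-unit $\eta$ to the $\Sigma$-unit followed by left coprojection into $\_ + 1$, matching $[V]^E = [\inj_1(V^E)]$. For the binding rule I would check that $U_E$ sends the $\Sigma_E$-strength $\tau$ and Kleisli extension $(\_)^*$ to the composite that binds $M^E$ of type $S^E + \com$ and then case-splits, continuing with $N^E$ on the $\inj_1$ branch and re-raising via $\inj_2(())$ on the $\inj_2$ branch --- precisely the clause $(\lett\spc M = x\spc \inn\spc N)^E = \lett\spc M^E = y\spc \inn\spc \cse\spc y\spc \as\spc \inj_1(x).N^E | \inj_2(x).\inj_2(())$. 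Granting these, the $\lett$ case closes from $U_E\[M\] = \[M^E\]$ and $U_E\[N\] = \[N^E\]$.

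The main obstacle is precisely this last verification, carried out at the level of plays: in $\Sigma_E$ the Kleisli extension propagates a raised global exception by replaying the pending exception answer, whereas after applying $U_E$ the exception answers become ordinary moves that must coincide with the explicit coproduct case analysis synthesized by $(\_)^E$ --- semantic propagation on the one side versus syntactic propagation via $\inj_2(())$ on the other. Showing that erasing the exception-answer labelling identifies these two strategies (equivalently, that $\[M^E\]$ computes the interpretation of $M$ under the composite monad $\Sigma(\_ + 1)$) is the crux; once established for $\eta$, $\tau$ and $(\_)^*$, the remainder of the induction is bookkeeping.
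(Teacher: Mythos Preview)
Your structural induction over the pure computational $\lambda$-calculus is correct and in fact spells out more carefully than the paper what ``equivalence of $\Sigma_E$ and $\Sigma(\_+1)$ via $U_E$'' actually entails at the level of unit, strength and Kleisli extension. For the $\L$-fragment alone your argument is sound and matches the paper's one-line appeal to the monad correspondence.

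The discrepancy is one of scope. Although the lemma as printed says ``$\L$-term'', the paper's own proof treats it as a statement about the full language $\Lce$: it first disposes of all $\Lc$-terms (so including $\new$ and $\callcc$) by the monad equivalence, and then isolates the one genuinely new case --- the constant $\nxn$ --- for which it must be checked directly that
\[
U_E(\xcell_E) \;=\; \[\lambda().\,\new\ x:=\ff.\,[\inj_1(\langle \handle(e),\rais(e)\rangle)]\].
\]
This is the content the lemma needs in order to feed the subsequent soundness/adequacy proposition for $\Lce$. By taking ``$\L$-term'' literally and explicitly excluding $\new$, $\callcc$ and $\nxn$, you have omitted precisely the case the paper singles out as the remaining obligation. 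The $\new$ and $\callcc$ cases are not hard (they reduce to checking that exception-completion $\widetilde{(\cdot)}$ of $\cell$ and $\callc$ matches the translations $\new^E$ and $\callcc^E$), but the $\nxn$ case requires a concrete comparison of the strategy $\xcell_E$ with the strategy denoted by the stateful macro $\nxn^E$, and this is where the substance lies.
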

\begin{proof}For types, and terms of $\Lc$, this follows from equivalence of $\Sigma_E$ and $\Sigma\_ +1$ via $U_E$. Thus, it remains to observe that these interpretations agree on the interpretation of new exception declaration --- i.e.
$\[\lambda ().\new\spc x:=\ff.[\inj_1(\langle \handl(e),\rais(e)\rangle)]\] = U_E(\xcell_E)$.
\end{proof}
Hence we have soundness and adequacy for the exception-arena semantics. 
\begin{proposition}$M \Downarrow$ if and only if $\[M\]_E \not=\bot$  
\end{proposition}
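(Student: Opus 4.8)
The plan is to obtain this proposition as a corollary of the two preceding lemmas together with the adequacy result already established for the continuations model $\G$, by transporting convergence of $M$ first through the exception-passing translation $(\_)^E$ and then down to the denotation in $\G^E$. The whole argument is a chain of equivalences, and I do not expect any new inductive work to be required.

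First I would note that $M^E$ is a term of the exception-free fragment $\Lc$, so the adequacy proposition already proved for the $\G$-model applies to it directly: $M^E \Downarrow$ iff $\[M^E\] \neq \bot$. Composing this with the operational soundness lemma for the translation, $M \Downarrow$ iff $M^E \Downarrow$, yields $M \Downarrow$ iff $\[M^E\] \neq \bot$, so that it remains only to relate the $\G$-denotation of the translation to the $\G^E$-denotation of the original term.

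That last link is supplied by the interpretation-correspondence lemma, which gives $\[M^E\] = U_E(\[M\]_E)$. Since $U_E$ is identity-on-morphisms --- it merely deletes the exception-answer labelling from arenas and leaves strategies, viewed as sets of justified sequences, untouched --- it both preserves and reflects the empty strategy, so that $U_E(\sigma) = \bot$ exactly when $\sigma = \bot$. Hence $\[M^E\] \neq \bot$ iff $\[M\]_E \neq \bot$, and splicing the three equivalences together gives $M \Downarrow$ iff $\[M\]_E \neq \bot$, as required.

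Since the substantive content lives entirely in the two lemmas (operational soundness of the exception-propagation rule, and the agreement of the two readings of $\nxn$), I expect the principal care-point here to be purely one of bookkeeping: matching up the notion of convergence across the extra $\com$ summand introduced by $(\_)^E$. A closed program $M:\com$ translates to $M^E:\com + \com$, whose right summand records an exception having escaped to the top level; one must be sure that $M \Downarrow$ corresponds to $M^E$ returning the left injection (non-exceptional termination) rather than merely to $\[M^E\]$ being non-empty, and that this is exactly the convergence predicate fixed by the translation lemma. I would also confirm that the correspondence lemma is invoked at the full $\Lce$-term $M$ --- that is, that its proof already covers local exception declaration, which is precisely the new-exception case verified there --- and not merely at $\Lc$-terms.
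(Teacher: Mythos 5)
Your proof is correct and follows exactly the route the paper intends: the proposition is obtained by chaining the operational soundness lemma ($M \Downarrow$ iff $M^E \Downarrow$), the already-established adequacy of the $\G$-model for $\Lc$, and the correspondence lemma $U_E(\[M\]_E) = \[M^E\]$ together with faithfulness of $U_E$. Your closing remarks about the $\com + \com$ summand and about the correspondence lemma covering new-exception declaration are sensible bookkeeping checks, but they do not alter the argument, which matches the paper's.
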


\section{Relating Control Games to the Exception Monad} 
We now relate the monadic and  control-games interpretations of $\Lce$ by  establishing a correspondence between the two models: a meaning-preserving functor into $\CG$ from a subcategory of  $\G^E$ consisting of exception-arenas and \emph{exception-propagating strategies}. 

 A sequence $s$ over $A$ is \emph{exception-propagating} if whenever Opponent raises an exception, Player always propagates it by playing the exception-answer to the pending question, and vice versa. Formally, define the set of exception-propagating sequences to be the least set of justified sequences such that:
\begin{itemize}
\item The empty sequence is exception-propagating, 
\item If $s$ is exception-propagating, and $m$ is not an exception move, then $sm$ is exception-propagating.
\item If $s$ (of even length) is exception-propagating then $se(q)e(q')$ is exception-propagating, where $q'$ is the pending question of $se(q)$.
\end{itemize}
 We write $EP_A$ for the set of exception-propagating control sequences on $A$.
 Recall that we defined  $K(A)$ to be the arena obtained by erasing all of the exception-answers in $A$. Given $s \in EP_A$ we define a control sequence $K(s)$ on $K(A)$ by:
\begin{itemize}
\item First, adding a control pointer from each question to its  pending question (or the token $*$ otherwise).
\item Then, deleting all exception answers.   
\end{itemize}
This is a well-defined control sequence; it is alternating since if $s$ is exception-propagating then all exception-moves  in $s$ come in adjacent pairs. Control pointers alternate in polarity  since the pending question is always of opposite polarity to the move about to be played (and there is always a pending question at any Player move). For an example, the first typical play given for the  new-exception strategy $\xcell_E$ (Fig. 1) is transformed to the typical play given for the corresponding control strategy $\xcell_C$

Extend the definition of $K$ to all justified sequences on the exception-arena  $A$ by letting $K(s) = K(t)$, where $t$ is the greatest exception-propagating prefix of $s$. A \emph{strategy} $\sigma$ on $A$ is \emph{exception-propagating} if $K(\sigma) = \{K(s)\ |\ s\in \sigma\}$ is a well-defined (thread-independent) control strategy. In other words:
\begin{itemize}
\item $K(\sigma)$ consists of even-length sequences --- $\sigma$ always propagates exceptions raised by Opponent. 
\item $K(\sigma)$ is even-branching --- $\sigma$ ignores exceptions raised by Opponent once they have been handled (but not their effect on the exception handling context). 
\end{itemize}  
We show that this is a compositional property of strategies, and that the action of $K$ is functorial, based on the following lemma:
\begin{lemma}Given  $s \in L_{(A \rightarrow B) \rightarrow C}$, if $s \restrict A,B$ and $s \restrict B,C$ are legal and exception-propagating, then:
\begin{itemize}
\item  $s \restrict A,C$ is exception-propagating. 
\item $K(s)\restrict A,C = K(s \restrict A,C)$.  
\end{itemize}
\end{lemma}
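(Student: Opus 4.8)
The plan is to derive both clauses from a structural description of how the exception moves in the interaction sequence $s$ over $A \rightarrow B \rightarrow C$ are matched, and then to track how this matching behaves when the middle component $B$ is hidden. Since $s \restrict A,B$ and $s \restrict B,C$ are each exception-propagating, every exception-answer visible in a view is matched with an adjacent exception-answer in that view: a move in $A$ is matched only in the $A,B$-view and a move in $C$ only in the $B,C$-view, whereas a move in $B$ is visible, hence matched, in both. Reading these matchings as edges on the set of exception moves, external moves (in $A$ or $C$) acquire degree one and internal moves (in $B$) degree two, so the connected components are paths whose two endpoints lie in $A \cup C$ and whose interior vertices lie in $B$. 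The first step is to show, using the switching and visibility conditions of the interaction together with the fact that each matched pair is adjacent in its own view and that exception-answers are leaves (so no intervening move is justified by one), that hiding the interior $B$-moves collapses each such path to a single pair $e(q_0)e(q_k)$ of external moves, adjacent in $s \restrict A,C$.

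Given this, clause~1 reduces to checking that the surviving move $e(q_k)$ is the exception-answer of the question pending in $s \restrict A,C$ at that point, so that the pair $e(q_0)e(q_k)$ is a legitimate exception-propagation step. I would obtain this by verifying that the externally-pending question is transported unchanged across the chain of internal propagations, so that the pending question computed in the $A,C$-view after hiding agrees with the one answered by the final propagate move. Since $s \restrict A,C$ is then obtained from the non-exception external moves of $s$ by inserting exactly these pairs, it lies in the least set defining the exception-propagating sequences, which is clause~1.

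For clause~2 I would first observe that $K(s) \restrict A,C$ and $K(s \restrict A,C)$ carry the same underlying justified sequence, namely the non-exception moves of $A,C$: one route erases all exception-answers and then hides $B$, the other hides $B$ and then erases the remaining $A,C$-exception-answers, and by clause~1 this latter erasure runs along the whole of $s \restrict A,C$. It then remains to match control pointers. In $K(s \restrict A,C)$ a question points to its pending question computed in $s \restrict A,C$; in $K(s) \restrict A,C$ it first points to its pending question in $s$, and the $\restrict$ operation then reroutes any pointer landing on a hidden or erased move back through $\open$-moves to the most recent preceding open move in $A,C$. The equality therefore reduces to the statement that $\pending$ commutes with hiding up to exactly this rerouting --- that the $A,C$-rerouting of the pending question of $q$ in $s$ is its pending question in $s \restrict A,C$ --- which I would extract from the relationship between $\pending$ and $\open$ and from the fact that the exception-answers deleted by $K$ sit precisely at the pending questions they record.

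The main obstacle is this pending-question bookkeeping, which underlies both clauses: showing that the identity of the externally-pending question survives faithfully through a run of internal exception propagations (clause~1), and that the pointer-rerouting performed by $\restrict$ through $\open$-moves reproduces exactly the pending question obtained after hiding (clause~2). This is where the interplay between the switching conditions, the definition of $\open$, and the control-pointer discipline has to be pinned down; once it is in place, the contiguity of exception runs and the equality of the underlying sequences follow routinely from the standard parallel-composition-plus-hiding machinery of \cite{McCT}.
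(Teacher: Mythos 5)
Your proposal is correct and follows essentially the same route as the paper's own (much terser) argument: the observation that exception-answers form contiguous runs in $s$ whose $B$-interior collapses under hiding to an adjacent external pair gives clause~1, and the pending-question bookkeeping you isolate as the main obstacle is exactly what the paper establishes by induction on the length of $s$ (that the pending question of $s \restrict A,C$ agrees with that of the relevant component $s \restrict A,B$ or $s \restrict B,C$), yielding clause~2. Your path/matching formulation of the block structure is a more explicit rendering of the same idea rather than a genuinely different decomposition.
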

\begin{proof}Since the $s \restrict A,B$ and $s \restrict B,C$ are exception-propagating, any runs of exception-answers occur in even-length blocks in $s$, and erasing the part in $B$ leaves an exception-propagating sequence.

We show by induction on the length of $s$ that the pending question in $s \restrict A,C$ is the pending question in the relevant fragment $s \restrict A,B$ or $s\restrict B,C$, and so control pointers in $K(s)\restrict A,C$ and  $K(s \restrict A,C)$ agree.  
\end{proof}Based on this lemma, we show:
\begin{proposition}The composition of exception-propagating strategies is exception-propagating. 
\end{proposition}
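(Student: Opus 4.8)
The plan is to show that $K$ sends composition to composition, i.e. $K(\sigma;\tau) = K(\sigma);K(\tau)$. Since $\sigma$ and $\tau$ are exception-propagating, $K(\sigma)$ and $K(\tau)$ are by definition well-defined thread-independent control strategies, and composition of control strategies is already known to be well-defined (by the extension to control strategies of the standard well-definedness argument noted above). Hence $K(\sigma);K(\tau)$ is a control strategy, so establishing the displayed equality shows that $K(\sigma;\tau)$ is likewise a control strategy — which is exactly the assertion that $\sigma;\tau$ is exception-propagating. I would prove the two inclusions separately, in each case working with an interaction sequence $t$ on $(A \Rightarrow B) \Rightarrow C$ whose restrictions $t \restrict A,B$ and $t \restrict B,C$ lie in $\sigma$ and $\tau$, and applying the preceding lemma.

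For $K(\sigma;\tau) \subseteq K(\sigma);K(\tau)$: given $s \in \sigma;\tau$ with witness $t$, the restrictions $t \restrict A,B$ and $t \restrict B,C$ are exception-propagating, so the lemma gives $K(t) \restrict A,C = K(t \restrict A,C) = K(s)$. Applying the lemma again with the roles of the components permuted yields $K(t) \restrict A,B = K(t \restrict A,B) \in K(\sigma)$ and $K(t) \restrict B,C = K(t \restrict B,C) \in K(\tau)$. Thus $K(t)$ is an interaction control sequence on $(K(A) \Rightarrow K(B)) \Rightarrow K(C)$ witnessing $K(s) \in K(\sigma);K(\tau)$. The lemma's identification of the control pointers in $K(t) \restrict A,C$ with those in $K(t \restrict A,C)$ is precisely what guarantees that the pointer-chasing performed by the control-sequence restriction operator agrees on both sides.

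For the reverse inclusion $K(\sigma);K(\tau) \subseteq K(\sigma;\tau)$, I would reconstruct an exception-propagating interaction from a control interaction. Given a control interaction $u$ with $u \restrict A,B \in K(\sigma)$ and $u \restrict B,C \in K(\tau)$, choose exception-propagating witnesses $t_1 \in \sigma$ and $t_2 \in \tau$ with $K(t_1) = u \restrict A,B$ and $K(t_2) = u \restrict B,C$. The reinsertion of exception-answers is deterministic: wherever $u$ is even-length any exception move must occur in an adjacent propagating pair, and the pending question recorded by each control pointer fixes the target of the corresponding propagated exception-answer. This lets me glue $t_1$ and $t_2$ into a single interaction $t$ on the exception-arenas with $t \restrict A,B = t_1$, $t \restrict B,C = t_2$, and $K(t) = u$; then $t \restrict A,C \in \sigma;\tau$, and by the lemma $K(t \restrict A,C) = K(t) \restrict A,C = u \restrict A,C$, so $u \restrict A,C \in K(\sigma;\tau)$.

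The main obstacle is this reconstruction: I must check that the exception moves in the hidden component $B$ can be synchronized between $t_1$ and $t_2$, i.e. that an exception raised by one strategy in $B$ is met by the exact propagating response of the other. This synchronization is forced precisely because both $\sigma$ and $\tau$ are exception-propagating — Player always propagates an Opponent exception with the exception-answer to the pending question — so along the shared boundary $B$ the exception-answer blocks of $t_1$ and $t_2$ line up uniquely, and no \emph{new} exception can be raised across the cut. Once well-definedness of this lift is secured, even-length-ness and even-branching of $K(\sigma;\tau)$ are inherited from $K(\sigma)$ and $K(\tau)$, non-emptiness is immediate, and thread-independence follows from that of the two factors together with the preservation of thread-independence by $K$, completing the proof.
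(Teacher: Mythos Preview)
Your approach is essentially the paper's: both derive the proposition from the preceding lemma, and the paper gives no further detail beyond ``based on this lemma''. You go further by arguing for the equality $K(\sigma;\tau)=K(\sigma);K(\tau)$, which is exactly the functoriality of $K$ that the paper states separately as a consequence. So the route is the same; you have simply unpacked it.

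Two remarks on the details. First, your invocation of the lemma ``with the roles of the components permuted'' to obtain $K(t)\restrict A,B = K(t\restrict A,B)$ and $K(t)\restrict B,C = K(t\restrict B,C)$ is not literally an instance of the lemma as stated, which concerns only the outer $A,C$ restriction under the hypothesis that the $A,B$ and $B,C$ restrictions are exception-propagating. The statements you need do hold, by the same inductive argument tracking the pending question through the interaction (this is precisely what the paper's proof sketch of the lemma establishes), but you should say that rather than appeal to a permutation of an asymmetric statement.

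Second, for the proposition itself the reverse inclusion is unnecessary. Once you have $K(\sigma;\tau)\subseteq K(\sigma);K(\tau)$, determinacy of $K(\sigma;\tau)$ is inherited as a subset of a deterministic set; even-length and even-prefix closure follow directly from the first bullet of the lemma (every play of $\sigma;\tau$ is exception-propagating, and exception moves occur in adjacent pairs); and thread-independence transfers from that of $\sigma;\tau$ because $K$ respects thread structure. Your reconstruction argument is therefore only needed for full functoriality, where the sketch you give --- determinism of the re-insertion of exception pairs, forced by exception-propagation on both sides of the cut --- is the right idea.
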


It is straightforward to verify that the identity strategy is exception propagating, with $K(\id) = \id$. Hence:
\begin{itemize}
\item Exception-propagating strategies  form a lluf subcategory $\G^{EP}$ of $\G^E$. 
\item $K$ acts as a functor from $\G^{EP}$ to $\CG$.
\end{itemize}
 Evidently, $K$  preserves Cartesian closed structure, and $\Sigma_E \cdot K \cong \Sigma \cdot K$. So for $\L$-types we have $K(\[T\]_{E}) \cong \[T\]_C$. Moreover, if we  apply $K$ to the exception-completion of a strategy, this is equivalent to decorating with control-pointers to the pending moves --- i.e. $K(\widetilde{\sigma}) = \widehat{\sigma}$. So $K$ preserves the meaning of $\Lc$-terms. It remains to check that this is the case for exception declaration.
\begin{lemma}$\xcell_E$ is exception-propagating, and  $K(\xcell_E) \cong \xcell_C$. 
\end{lemma}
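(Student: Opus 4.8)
The plan is to verify the final Lemma by establishing two things: that $\xcell_E$ is exception-propagating (so that $K$ is defined on it), and that applying $K$ yields precisely $\xcell_C$ up to the isomorphism of arenas $K(\Sigma_E(\ldots)) \cong \Sigma(\ldots)$. I would begin by recalling the explicit description of $\xcell_E$ as the set of Player-well-bracketed plays of the form $qa(({\mathsf{handl}}\ok)^*(\raze\cdot e[\raze]))^*(e[\ok]\caught)^*$, and the description of $\xcell_C$ as the control sequences of the form $qa(((\mathsf{try}\ok)^*(\thro\caught)^*)^*$ with each $\caught$ justified by the most recent open $\mathsf{try}$.

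First I would check exception-propagation directly against the inductive definition of $EP_A$. The point is that in every play of $\xcell_E$, exception-answer moves occur only in the adjacent pairs $\raze\cdot e[\raze]$ (Opponent raises, Player propagates) and $e[\ok]\caught$ — here I would confirm that $\caught$ is a \emph{non}-exception answer handling the global exception that Opponent introduced via $e[\ok]$. Since exception moves arrive only in these matched two-move blocks, and in the $\raze\cdot e[\raze]$ block Player's $e[\raze]$ answers the pending question exactly as required by the third clause of the definition of exception-propagating sequences, $\xcell_E$ satisfies the propagation condition. I would also note that Player never spontaneously raises an exception (the second clause), so $K(\xcell_E)$ consists of even-length, even-branching sequences and is therefore a well-defined control strategy, as required for $\xcell_E \in \G^{EP}$.

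Next I would compute $K(\xcell_E)$ move-by-move using the two-step recipe: annotate each question with a pointer to its pending question, then delete exception answers. The $\raze\cdot e[\raze]$ blocks get erased entirely (both are exception moves once $\raze$ is matched by $e[\raze]$), leaving no trace except that the control pointer bookkeeping records where exceptions were raised and propagated. The surviving moves are exactly $q,a,{\mathsf{try}}={\mathsf{handl}},\ok$, and the $\caught$ produced in response to $e[\ok]$. The crucial check is that after erasure the control pointer on $\caught$ lands on the appropriate $\mathsf{try}$: because $K(s)$ points each question to its pending question before deletion, and because the pending question at the moment Player plays $\caught$ in $\xcell_E$ is precisely the most recent open $\mathsf{try}$, the resulting control pointer matches the justification condition defining $\xcell_C$. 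This is exactly the worked example the text already flags (the first play of Figure \ref{exmpl} transforming into the typical play of $\xcell_C$), so I would lean on that as the guiding instance.

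The main obstacle I anticipate is the matching of control pointers under erasure in the presence of \emph{nested} raises and handles — i.e. when multiple $\mathsf{try}/\ok$ pairs are open simultaneously and several $\raze\cdot e[\raze]$ blocks propagate an exception outward past intermediate handlers before it is finally caught. Here I must verify that the ``pending question'' computed in $\xcell_E$ (following the chain of exception propagations $e[\raze]$) coincides, after deletion of the exception moves, with the ``most recent open $\mathsf{try}$'' in the control-sequence sense of $\open(\cdot)$. The argument is that propagating an exception past an inner handler closes that handler's question (its $\ok$ is answered, or rather skipped, by the propagating block), so the open-question stack in $K(\xcell_E)$ shrinks in exact correspondence with the propagation chain; I would make this precise by an induction on the number of propagation steps, tracking the pending-question function through each matched block, and then conclude $K(\xcell_E) \cong \xcell_C$ as control strategies over the isomorphic arenas.
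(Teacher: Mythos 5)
Your proposal is correct and follows essentially the same route as the paper's (much terser) proof: recall the explicit regular-expression description of the plays of $\xcell_E$, observe that their block structure makes the strategy exception-propagating, and check that erasing the exception moves leaves plays of the form defining $\xcell_C$ with each $\caught$ pointing to the pending (hence most recent open) $\mathsf{try}$. The only substantive addition is your explicit induction on the pending-question bookkeeping for nested propagation, which the paper compresses into the phrase ``the pending $\mathsf{try}$ move in the original sequence becomes the target of a control pointer''.
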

\begin{proof}
Recall that $\xcell_E$ is the set of well-bracketed plays $qa ({\mathsf{handle}}\ok)^* (\raze\cdot e[\raze]))^*(e[\ok]\caught)^*$. This is evidently exception-propagating, and erasing the exception moves on exception-propagating plays leaves a control sequence  of the form $qa(((\mathsf{try}\ok)^*(\thro\caught)^*)^*$, where  the pending $\mathsf{try}$ move in the original sequence becomes the target of a control pointer and hence the most recent open handler as required. 
\end{proof}
Thus we have shown that:
\begin{proposition}Every $\Lce$-term $M$ denotes an exception-propagating strategy such that $K(\[M\]_{E}) = \[M\]_{C}$.  
\end{proposition}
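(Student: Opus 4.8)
The plan is to proceed by structural induction on the typing derivation of $M$, exploiting the fact that both $\[\_\]_E$ and $\[\_\]_C$ are defined by the very same categorical recipe (Table 3) from the Cartesian closed and strong-monad structure of $\G^E$ and $\CG$ respectively, together with fixed denotations for the three effect constants $\new$, $\callcc$ and $\nxn$. Since we have already established that $K$ is a functor $\G^{EP} \to \CG$ preserving Cartesian closed structure with $\Sigma_E \cdot K \cong \Sigma \cdot K$, and that $K(\[T\]_E) \cong \[T\]_C$ on types, the inductive clauses for the pure term-formers reduce to the commutation of $K$ with the interpreting combinators, and the induction is essentially formal once the constants are handled.

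Concretely, I would prove the two assertions simultaneously. For the first --- that $\[M\]_E$ is exception-propagating --- observe that $\G^{EP}$ is a lluf sub-CCC of $\G^E$, closed under composition and containing the identities, and that all the structural morphisms used in Table 3 (projections, diagonals, evaluation, currying, and the unit, multiplication and tensorial strength of $\Sigma_E$) are exception-propagating: the copycat-derived maps reflect whatever Opponent plays, including exception answers, into the opposite component, so they propagate exceptions by construction, and the lifted-sum monad maps respect the exception-answer discipline via $U_E\cdot\Sigma_E = \Sigma(\_+1)\cdot U_E$. The effect constants are exception-propagating by prior results: $\[\new_T\]_E$ and $\[\callcc\]_E$ are exception-completions $\widetilde{\cell}$ and $\widetilde{\callc}$, hence exception-propagating by construction, and $\xcell_E$ is shown exception-propagating by the preceding lemma. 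Thus every denotation, being a composite of exception-propagating strategies, lands in $\G^{EP}$. For the second assertion, functoriality of $K$ on $\G^{EP}$ and its preservation of the CCC and monad structure let me push $K$ through each interpreting combinator, so that the induction hypotheses $K(\[M_i\]_E) = \[M_i\]_C$ on subterms assemble to $K(\[M\]_E) = \[M\]_C$. The base cases are exactly the constants: $K(\widetilde{\sigma}) = \widehat{\sigma}$ delivers $K(\[\new_T\]_E) = \[\new_T\]_C$ and $K(\[\callcc\]_E) = \[\callcc\]_C$, while the preceding lemma supplies $K(\xcell_E) \cong \xcell_C$.

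The main obstacle will be coherence bookkeeping rather than any genuinely new combinatorial content. The type-level correspondence is only an isomorphism, $K(\[T\]_E) \cong \[T\]_C$, whereas the proposition asserts an equation; strictly this must be read modulo the canonical isomorphisms on the domain and codomain types, and one must check these fit together coherently so that the naturality square for each combinator commutes after transport along them. Particular care is needed for the $\lett$-clause, where the interpretation uses the strength and Kleisli lifting $(\_)^*$ of $\Sigma_E$: here the isomorphism $\Sigma_E \cdot K \cong \Sigma \cdot K$ must be shown natural and compatible with strength, so that $K$ genuinely commutes with $(\_)^*$. Once this coherence is established, the verification for each remaining term-former is routine diagram-chasing, and the two clauses of the induction close together.
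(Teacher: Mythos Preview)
Your proposal is correct and follows essentially the same approach as the paper: the proposition is presented there as an immediate consequence of the accumulated facts that $K$ is a functor $\G^{EP}\to\CG$ preserving the Cartesian closed and $\Sigma$-monad structure, that $K(\widetilde{\sigma})=\widehat{\sigma}$ handles the $\Lc$-constants $\new$ and $\callcc$, and that the preceding lemma handles $\nxn$. You have simply spelled out in greater detail (and with attention to the coherence bookkeeping around the type-level isomorphisms) what the paper compresses into ``Thus we have shown that\ldots''.
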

and hence established soundness and adequacy for the exception-arena semantics. 
\begin{proposition}$M \Downarrow$ if and only if $\[M\]_C \not=\bot$.  
\end{proposition}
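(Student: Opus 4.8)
The plan is to transport the adequacy result already established for the exception-arena semantics along the functor $K$, using the identity $K(\[M\]_E) = \[M\]_C$ supplied by the preceding proposition. Since we have shown that $M\Downarrow$ if and only if $\[M\]_E \neq \bot$, and since $\[M\]_C = K(\[M\]_E)$, the whole statement reduces to proving that $K$ both preserves and reflects triviality on the denotation of a closed program $M:\com$ --- that is, $\[M\]_E \neq \bot$ if and only if $K(\[M\]_E) \neq \bot$.

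The reflecting direction is immediate: $K$ acts on strategies by $K(\sigma) = \{K(s)\mid s\in\sigma\}$ and $K(\varepsilon)=\varepsilon$, so any nonempty member of $K(\sigma)$ can only be the image of a nonempty member of $\sigma$; hence $K(\sigma)\neq\bot$ forces $\sigma\neq\bot$. The preserving direction is where care is needed, and it is the step at which the exception-propagating discipline does the real work. Here I would first invoke the preceding proposition, which tells us the denotation of any $\Lce$-program is an exception-propagating strategy, and then analyse the shape of a minimal nonempty play. A closed program $M:\com$ is interpreted on $\Sigma_E\1$, whose root is an Opponent question $q$; the only candidate even-length plays of length two are $qa$, where $a$ is the normal answer, and $q\,e(q)$, where $e(q)$ is the exception answer to $q$. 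The key observation is that the latter is \emph{not} exception-propagating: the clauses defining exception-propagating sequences allow Player to play an exception move only immediately after an Opponent exception move (propagation), and never in response to the non-exception move $q$. Consequently, if $\[M\]_E\neq\bot$ its minimal nonempty play must be $qa$, and $K(qa)=qa$ --- now carrying a control pointer from $q$ to $*$ --- which is again nonempty, so $K(\[M\]_E)\neq\bot$.

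Combining the two directions with $\[M\]_C = K(\[M\]_E)$ and the adequacy of the exception-arena model yields the chain $M\Downarrow$ iff $\[M\]_E\neq\bot$ iff $\[M\]_C\neq\bot$, which is the desired statement. The only genuine subtlety --- and the step I expect to be the crux --- is the observation that exception-propagation prevents a program's strategy from answering the top-level question with an exception move, so that erasing exception answers cannot collapse a convergent denotation to $\bot$. Everything else is bookkeeping already delivered by the earlier propositions on the functoriality of $K$ and on adequacy for $\Sigma_E$.
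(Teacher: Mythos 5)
Your proof is correct and takes essentially the same route as the paper: the paper obtains this proposition in one line by combining adequacy for the exception-arena model with the preceding proposition that every program denotes an exception-propagating strategy satisfying $K(\[M\]_E)=\[M\]_C$, leaving the preservation and reflection of $\bot$ by $K$ implicit. The only thing you add is to spell out that implicit step --- in particular the observation that exception-propagation rules out $q\,e(q)$ as a play of a program's denotation, so that erasing exception answers cannot collapse a convergent denotation to $\bot$ --- which is exactly the detail the paper elides.
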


\section{Full Abstraction}
Having proved soundness for the control strategy model using algebraic methods (correspondence with the monad model), we may establish full abstraction  via reduction (by factorization) to the definability result for the original model of $\Lr$. 

Recall that a (thread-independent) strategy $\sigma$ is compact (in the inclusion order) if the set of \emph{well-opened} sequences (those having a unique initial  move) in $\sigma$ is finite. The following result is proved (by factorisation) in \cite{AHM}.
\begin{proposition}For any type $\L$-type $T$, every compact well-bracketed strategy on $\Sigma\[T\]$ is the denotation of a $\Lr$ term $M_\sigma:T$.
\end{proposition}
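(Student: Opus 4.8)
The plan is to follow the \emph{factorization} method of \cite{AHM}: to peel off the state-dependence of a well-bracketed strategy into explicit reference cells, leaving an \emph{innocent} (purely functional) residual, and then to invoke definability for each of the two simpler ingredients separately.

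First I would establish the factorization lemma. Since $\sigma$ is compact it has only finitely many reachable well-opened positions, so its history-dependence may be summarised by a transition on a \emph{finite} data type $D$ (built from $\com$, $+$ and $\times$). The claim is that $\sigma$ factors, up to the cartesian closed structure of $\G^{B}$, as $\cell_{D};\hat\tau$, where $\cell_{D}$ is the reference-cell strategy on $\var[D]$ of \cite{AHM} and $\hat\tau$ is innocent --- its response at any position depending only on the current Opponent-view. Concretely, one encodes the ``current state'' of $\sigma$ as an element of $D$, threads it through a single cell, and lets $\hat\tau$ dereference the cell, compute $\sigma$'s response together with the successor state from the pair of stored state and current move, and assign the new state back. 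The content of the lemma is that this composite reproduces exactly the plays of $\sigma$.

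Second I would prove definability for the innocent residual: every compact innocent strategy over an $\L$-arena is the denotation of an $\L$-term (and a fortiori an $\Lr$-term). This is the standard call-by-value PCF-style induction on the finite view-function of $\hat\tau$, where the structure of the arena dictates the term-former used --- $\match$ at a product, $\cse$ at a sum, $\lett$ at a lifted sum, $\lambda$-abstraction and application at function positions, and $\void$ at the empty type --- and the finitely many Opponent continuations are handled by recursively constructed subterms. Compactness is what guarantees that this recursion on strategy-size terminates, so that no fixpoint operator (absent from $\L$) is required.

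Finally, since $\cell_{D}$ is by construction the denotation of $\new_{D}$ in $\Lr$, substituting the $\Lr$-term obtained for $\hat\tau$ (which takes the cell's methods as an argument) into the body of a $\new$-declaration yields a closed $\Lr$-term $M_\sigma:T$ with $\[M_\sigma\] = \sigma$. The principal obstacle is the factorization lemma itself: one must pinpoint the finite state type $D$ that faithfully captures $\sigma$'s dependence on history, and then verify that inserting the cell's read and write moves into an innocent interaction recovers $\sigma$ on the nose --- in particular, that these auxiliary moves respect well-bracketing and the thread-independence conditions, so that hiding the cell in the composite $\cell_{D};\hat\tau$ leaves precisely the plays of $\sigma$.
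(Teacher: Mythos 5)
Your proposal is correct and follows essentially the same route as the paper, which does not prove this proposition itself but cites it as the factorization-based definability result of \cite{AHM}: a compact well-bracketed, thread-independent strategy is factored through the $\cell$ strategy into an innocent residual, innocent definability is applied, and the term is closed off with a $\new$ declaration. Your reconstruction, including the use of a finite state type justified by compactness and the checks that the inserted read/write moves preserve well-bracketing and thread-independence, is exactly the argument the paper is appealing to.
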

Since $J:\G^B \rightarrow \CG$ preserves the meaning of $\Lr$-terms, every compact strategy in the image of $J$ is definable --- i.e. all compact, local well-bracketed strategies which also satisfy the following condition:
\begin{description}
\item [Control blindness]$|\sigma| = \{|s|\ |\ s \in \sigma\}$ is a deterministic strategy on $A$.
\end{description}
\begin{corollary}\label{ldef}Every  compact local, well-bracketed and  control-blind  strategy over an $\L$ type-object is definable as a term of $\Lr$. 
\end{corollary}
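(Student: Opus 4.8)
The plan is to realise the given control strategy as $J(\sigma)$ for a suitable well-bracketed strategy $\sigma$, and then combine the definability result of the previous Proposition with the fact that $J:\G^B \rightarrow \CG$ preserves the meaning of $\Lr$-terms. So let $\tau$ be a compact, local, well-bracketed, control-blind control strategy on an $\L$ type-object $A = \Sigma\[T\]$. The obvious candidate is the underlying strategy $\sigma := |\tau| = \{|s|\ |\ s \in \tau\}$ obtained by forgetting all control pointers. Here control-blindness is exactly the hypothesis needed to make $\sigma$ a \emph{deterministic} strategy on $A$; together with the well-bracketing hypothesis, the thread-independence inherited from $\tau$, and compactness (which transfers because $|\cdot|$ maps the finitely many well-opened sequences of $\tau$ onto those of $\sigma$, well-openedness depending only on underlying moves), this exhibits $\sigma$ as a compact well-bracketed strategy in $\G^B$.

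It then remains to check that $\tau = \widehat{\sigma} = J(\sigma)$. The inclusion $\tau \subseteq \widehat{\sigma}$ is immediate: every $s \in \tau$ is player-local by the locality hypothesis, so $s \in \Loc_A$, and $|s| \in \sigma$ by definition, whence $s \in \widehat{\sigma}$. For the reverse inclusion I would argue by induction on length that every player-local control sequence $s$ with $|s| \in \sigma$ already lies in $\tau$. Writing $s = s' m n$ with $s' \in \tau$ by the induction hypothesis, the point is that, given control-blindness together with locality, Player's \emph{control}-response after $s'$ depends only on the underlying history $|s'|$: the underlying move is fixed by determinacy of $\sigma$, while the control pointer on a Player question is forced by locality to point to the pending question, which is itself a function of $|s'm|$. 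Consequently Player's response is insensitive to the pointer Opponent chooses to attach to $m$, and since $|s'mn| = |s| \in \sigma$ guarantees that Player does respond with underlying move $|n|$, we obtain $s'mn \in \tau$.

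Finally, by the preceding Proposition the compact well-bracketed strategy $\sigma$ on $\Sigma\[T\]$ is the denotation $\[M_\sigma\]$ of some $\Lr$-term $M_\sigma:T$; and since $J$ preserves the meaning of $\Lr$-terms, $\tau = J(\sigma) = \[M_\sigma\]_C$ is definable, as required. I expect the delicate step to be the reverse inclusion $\widehat{\sigma}\subseteq\tau$: one must be sure that $\tau$ really does respond to \emph{every} admissible choice of Opponent control pointer, not merely to some. This is where I would invoke receptivity of control strategies to Opponent moves, so that $s'm$ is a position of $\tau$ for each legal pointer on $m$, after which control-blindness and locality pin down a unique Player response whose underlying move and pointer coincide with those prescribed by $s$. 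Verifying that the three conditions together force $\tau$ to contain \emph{all} player-local decorations of its underlying plays, rather than only some of them, is the crux of the argument; the remaining verifications (preservation of the monad and exponential structure, transfer of compactness and thread-independence) are routine.
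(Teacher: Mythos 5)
Your proposal is correct and follows essentially the same route as the paper, whose entire argument is the observation immediately preceding the corollary: the compact, local, well-bracketed, control-blind strategies are exactly the image of $J$, so the previous Proposition plus the fact that $J$ preserves the meaning of $\Lr$-terms gives definability. The only difference is that you spell out the identification $\tau = \widehat{|\tau|}$ and honestly flag the delicate reverse inclusion (that $\tau$ must contain \emph{every} player-local decoration of its underlying plays, which needs some receptivity-to-Opponent-pointers assumption not explicit in the paper's definition of control strategy) --- a point the paper simply elides by asserting the characterisation of the image of $J$ without proof.
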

We now factorize any compact control-blind strategy into the composition of a definable strategy with the denotation of $\callcc$. 
\begin{lemma}For any (compact) control-blind strategy  $\sigma:\1 \rightarrow A$ there is a (finitary) local, well-bracketed  strategy $\widetilde{\sigma}: ((\Sigma\0 \Rightarrow \Sigma\1) \Rightarrow \Sigma\1) \rightarrow A$ such that  $\Lambda(\callc_{1,0});\widetilde{\sigma} = \sigma$.
\end{lemma}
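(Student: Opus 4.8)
The plan is to realise $\widetilde\sigma$ as the strategy obtained from $\sigma$ by \emph{re-routing} each non-local control pointer through a fresh copy of the $\callcc$ oracle supplied in the domain $C := (\Sigma\0 \Rightarrow \Sigma\1) \Rightarrow \Sigma\1$, leaving behind a residual that is local and well-bracketed. Since $\sigma$ is control-blind, its pointer-forgetting set $|\sigma|$ is a deterministic strategy on $A$, so all the information in $\sigma$ beyond $|\sigma|$ is carried by the control pointers: for each Player question, the earlier Opponent question (or the token $*$) that it targets. A pointer is \emph{local} when it targets the pending question, in which case it is already admissible for a local, well-bracketed strategy; the factorization therefore only has to account for the \emph{non-local} pointers, which are precisely the source of the violation of (weak) bracketing that $\Lambda(\callc_{1,0})$ is meant to supply.

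First I would fix, for each $s \in \sigma$, a corresponding play $\lceil s\rceil$ of $\widetilde\sigma$ on $C \Rightarrow A$. The $A$-moves of $\lceil s\rceil$ and their justification are copied verbatim from $|s|$. For each Opponent question $q$ of $s$ that is the target of some later non-local Player pointer, $\widetilde\sigma$ opens a fresh thread of the oracle immediately after $q$ by playing $\lbl$ (pointing, as pending, to $q$) and receiving the oracle's reply $\ok$; this records (``captures'') the point $q$. When $\sigma$ subsequently plays a Player question $p$ whose pointer is non-local and targets $q$, $\widetilde\sigma$ instead plays $\jump$ in the corresponding thread, receives the oracle's $\caught$, and then plays $p$ itself with a pending (hence local) control pointer. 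Because the exponential structure of $C \Rightarrow A$ permits several threads of $C$ to be opened, distinct targets are served by distinct threads, and compactness of $\sigma$ ensures only finitely many are needed, so $\widetilde\sigma$ is finitary. By construction every Player question of $\widetilde\sigma$ ($\lbl$, $\jump$ and the $A$-questions) points to the pending question, so $\widetilde\sigma$ is local, and the only bracketing-violating answer $\caught$ belongs to the oracle and is an Opponent move from $\widetilde\sigma$'s viewpoint, so $\widetilde\sigma$ is well-bracketed.

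Next I would verify $\Lambda(\callc_{1,0});\widetilde\sigma = \sigma$. Composition hides the moves of $C$, so it suffices to check that, thread by thread, the concrete behaviour of $\callc_{1,0}$ cancels the inserted interrogations and reinstates exactly the original pointers. Recall that $\callc$ answers $\lbl$ with the Player question $\ok$, responds to the ensuing $\jump$ with $\caught$, and otherwise plays copycat between the $\ok$-hereditary and $\caught$-hereditary moves. Thus each inserted capture/invoke block $\lbl\,\ok\ldots\jump\,\caught$ around a target $q$ and its invocation $p$ collapses, under hiding, to a single Player move at $p$ whose control now points non-locally back to $q$ --- precisely the pointer of $\sigma$. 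Determinacy and thread-independence of the result then follow from those of $\sigma$ (via control-blindness) together with the corresponding properties of $\callc$ and of composition, with pointer-forgetting recovering the deterministic $|\sigma|$ on the hidden restriction.

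The main obstacle I anticipate is the pointer bookkeeping in this last step: one must show that $s \mapsto \lceil s\rceil$ is a bijection onto the plays of $\Lambda(\callc_{1,0});\widetilde\sigma$ after hiding, so that composition reinstates each original control pointer and introduces no spurious one. This requires care when captures are \emph{nested} or \emph{interleaved} --- several targets are live and several oracle threads are open simultaneously --- to guarantee that each $\jump$ is matched to the thread capturing the correct $q$, that the copycat discipline of $\callc$ routes the intervening moves appropriately, and, most delicately, that at each invocation the captured $\ok$ really is pending so that $\jump$ may be played with a local pointer (which is what keeps $\widetilde\sigma$ itself local). Once this correspondence is established, the remaining properties (locality, well-bracketing, determinacy, thread-independence, finitariness) are routine, and the identity $\Lambda(\callc_{1,0});\widetilde\sigma = \sigma$ follows.
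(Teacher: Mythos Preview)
Your overall strategy---re-routing each non-local jump through a fresh $\callc$-thread---is exactly the factorization the paper has in mind. However, there are two concrete gaps in the construction as you describe it.

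\textbf{Lazy labelling breaks determinism.} You open a thread $\lbl,\ok$ immediately after an Opponent question $q$ \emph{only} when $q$ is later the target of some non-local Player pointer in $s$. But $\widetilde\sigma$ must be a strategy: after Opponent plays $q$, it has to commit to either $\lbl$ or the next $A$-move, and it cannot yet know whether $q$ will be targeted, since that depends on Opponent's future choices. Two plays $s,s'\in\sigma$ agreeing up to and including $q$ can differ on whether $q$ is ever targeted, so $\lceil s\rceil$ and $\lceil s'\rceil$ diverge at a Player move and $\widetilde\sigma$ is not even-branching. The paper's fix is simply to be eager: interject $\lbl,\ok$ after \emph{every} Opponent question, whether or not it is ever used. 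Unused labels are harmless in the composite, and compactness still gives finitariness.

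\textbf{Player answers are not handled.} You treat only Player \emph{questions} with non-local control pointers, and then assert that ``the only bracketing-violating answer $\caught$ belongs to the oracle''. This is not so: a control-blind $\sigma$ need not be well-bracketed, so it may contain Player answers in $A$ whose justifier is not the pending question. Copying these verbatim into $\widetilde\sigma$ leaves them ill-bracketed. The paper therefore also inserts blocks of $\jump,\caught$ moves \emph{before each Player answer}, closing all intervening open questions until the intended justifier becomes pending; only then is the answer played. The same device (invoke the label attached to the target question) works, but you must apply it to answers as well as to questions.

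With both fixes in place---eager labelling after every Opponent question, and $\jump,\caught$ insertions before every Player move whose (control or justification) pointer is not to the pending question---your argument goes through and coincides with the paper's.
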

\begin{proof}This is essentially the factorization given in \cite{L}: we define a map $\widetilde{\_}$ from control sequences to Player well-bracketed sequences in $L_{((\Sigma\0 \Rightarrow \Sigma\0)  \Rightarrow \Sigma\1)  \Rightarrow A}$ which interjects $\lbl,\ok$ after each Opponent question, and blocks of $\jump,\caught$ moves before each Player answer, so that  any intervening questions are closed.



\end{proof}
So it remains to show that control-blind strategies may be factorized as the composition of a local control strategy with the strategy $\xcell$. 
\begin{lemma}For any (compact) control strategy $\sigma:\1 \rightarrow A$ there is a (compact) control-blind strategy $\widehat{\sigma}: \[\exn\] \rightarrow A$ such that  $\xcell;\widehat{\sigma} = \sigma$
\end{lemma}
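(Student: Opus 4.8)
The plan is to mirror, in a dual register, the factorization through $\callc$ carried out in the preceding lemma: there a control-blind strategy is decomposed into a well-bracketed one by pulling its continuation content into $\callc$; here I would decompose a general control strategy $\sigma$ into a control-blind one by pulling its exception content into the single exception object $\xcell$. Concretely, I would define a map sending each $s \in \sigma$ (a control sequence over $\1 \Rightarrow A \cong A$) to a control sequence $\widehat{s}$ over $\[\exn\] \Rightarrow A$, and take $\widehat{\sigma}$ to be the even-prefix closure of the image. Since $\[\exn\]$ sits in argument position, Player owns the method calls $\mathsf{try}$ (open a handler) and $\thro$ (throw), while $\ok$ and $\caught$ are Opponent responses, supplied by $\xcell$ after composition.

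The translation proceeds move by move, interjecting exception interactions exactly where the preceding lemma interjected $\lbl,\ok$ and $\jump,\caught$. After each Opponent question of $A$ that can be the target of a later control pointer I would interject a handler-opening pair $\mathsf{try}\,\ok$, so that the stack of open handlers in $\widehat{s}$ tracks the stack of open questions ($\open$) of $s$. Immediately before each move of $\sigma$ whose control pointer is non-standard --- i.e. a genuine jump to some open target $q$ rather than to the currently pending question --- I would interject a block of throw/catch pairs $\thro\,\caught$, unwinding the handler stack one level at a time: each $\thro$ is routed by $\xcell$ to the most recent open $\mathsf{try}$, and $\widehat{\sigma}$ either proceeds, if that handler is the one attached to $q$, or re-throws. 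The re-throwing is decided solely from the justification pointer of the $\caught$ move (which $\mathsf{try}$ absorbed it) and never from a control pointer; since justification pointers belong to the underlying sequence $|s|$, this is what keeps $|\widehat{\sigma}|$ deterministic. After the block, every residual control pointer in $\widehat{s}$ points to its pending question.

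I would then verify three points. First, that the image is a well-defined control strategy: alternation and the pointer-polarity conditions hold because the inserted moves occur in adjacent pairs and the handlers nest according to $\open$. Second, that $\widehat{\sigma}$ is control-blind, which is immediate since by construction all of its control pointers are the standard pending-question decorations, so $|\widehat{\sigma}| = \{|s| \mid s \in \widehat{\sigma}\}$ is deterministic; compactness is preserved because the translation lengthens each sequence by only a bounded amount per move. Third, that $\xcell;\widehat{\sigma} = \sigma$: composing hides the $\[\exn\]$-moves, and the control-sequence restriction rewrites the now-dangling pointers by following them back through the hidden open $\mathsf{try}$ moves; combined with the most-recent-open-handler discipline of $\xcell$, the inserted throw/catch blocks resolve precisely to the jumps they were built to simulate, reinstating the original (possibly non-standard) control pointers of $s$.

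The main obstacle is the reconciliation of arbitrary control-pointer targets with the most-recent-open-handler discipline hard-wired into $\xcell$: a control pointer may aim at an open question that is not the innermost, whereas $\xcell$ only ever catches at the innermost open handler, and moreover the jumps made by Opponent must emerge from the hidden interaction rather than from any active throw by $\widehat{\sigma}$. Making the cascade-and-re-throw mechanism correct therefore requires establishing simultaneously that the re-throwing is driven only by justification information --- so that control-blindness survives --- and that, under hiding and control-sequence restriction, the block composes back to exactly $\sigma$, neither over- nor under-unwinding. Matching handler scopes to the $\open$-structure of control sequences, and checking this correspondence against the behavior of $\xcell$ fixed earlier, is where the real work lies.
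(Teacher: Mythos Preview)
Your proposal conflates \emph{control locality} (every Player question points to the pending question) with \emph{control blindness} (the underlying strategy $|\widehat{\sigma}|$ is deterministic, i.e.\ Player's responses are independent of Opponent's control pointers). Making Player's control pointers standard is neither necessary nor sufficient for control-blindness: what must be eliminated is $\widehat{\sigma}$'s dependence on \emph{Opponent's} control pointers, and your unwinding mechanism does not do this. Your construction inserts throw/catch blocks ``before each move whose control pointer is non-standard''; applied to an Opponent move this would require $\widehat{\sigma}$ to act before Opponent has revealed the pointer, which is impossible, and if you insert only before Player moves then Opponent's non-standard pointers remain in $\widehat{s}$ and your ``all pointers are standard'' argument collapses. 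The cascade-and-re-throw idea does not rescue this: deciding when to stop re-throwing (``if that handler is the one attached to $q$'') requires knowing the target $q$, which for an Opponent move is exactly the control-pointer information you are forbidden to inspect.

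The paper's construction runs in the other direction. It does not attempt to make Player's pointers local at all; instead, after each Opponent move it has $\widehat{\sigma}$ throw once, and the behaviour of $\xcell$ --- catching at the most recent \emph{open} handler, where openness is determined by the control structure including Opponent's just-played pointer --- means that the justification pointer of the returned $\caught$ move encodes Opponent's control pointer. The key property is then $|\widehat{s}| = |\widehat{t}| \Rightarrow s = t$: the justification structure of $\widehat{s}$ alone determines all the control pointers of $s$, so $\widehat{\sigma}$ can compute its response from $|\widehat{s}|$ and is control-blind by definition. In short, the factorization through $\xcell$ is about making Opponent's control pointers \emph{manifest} as justification data, not about normalising Player's pointers to pending.
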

\begin{proof}
We define a map $\widehat{\_}$ from control sequences on $A$ to control sequences on $\exn \rightarrow A$ which makes all control pointers on $O$-moves manifest. It raises an exception before each Opponent move, and handles it  after each $O$-move. The handler which catches the raised exception is determined by the control pointers from $O$-moves in $s$, so that  $|\widetilde{s}| = |\widetilde{t}|$ implies  $s = t$. Since  $\widetilde{s} \restrict A = s$, and   $\widetilde{s} \restrict \[\exn\] \in \xcell$, we have  $\xcell;\widehat{\sigma} = \sigma$ as required
\end{proof}
For any compact strategy, $\sigma:I \rightarrow \Sigma\[T\]$, the control-blind strategy $\widetilde{\sigma}$ is definable as a term  $x:\exn \vdash M$ and thus $\sigma$ is definable as $\lett \spc x= \nxn\spc \inn\spc M$. The proof of full abstraction based on finite definability is now standard. 
%

\begin{theorem}The model of $\Lce$ in $\Fam(\CG)_{\Sigma}$ is (inequationally) fully abstract --- i.e. for any $\L$-type $T$ and any terms $M,N:T$, $\[M\]_C \subseteq  \[N\]_C$ if and only if $M \lesssim N$.
\end{theorem}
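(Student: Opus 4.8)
The plan is to derive the result from the two ingredients assembled above: computational adequacy for the control-strategy model (the preceding Proposition, $M\Downarrow$ iff $\[M\]_C \neq \bot$) and definability of compact strategies (the factorization of any compact control strategy through $\xcell$ and $\callc$, which reduces via Corollary \ref{ldef} to the definability result for $\Lr$ from \cite{AHM}). As usual, full abstraction then splits into a soundness direction and a completeness direction, and I would treat them separately.

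First I would dispatch the soundness implication, that $\[M\]_C \subseteq \[N\]_C$ implies $M \lesssim N$. Composition in $\CG$, and hence in $\Fam(\CG)_\Sigma$, is monotone with respect to the inclusion order, so compositionality of the interpretation gives $\[C[M]\]_C \subseteq \[C[N]\]_C$ for every closing context $C[\_]:\com$. If $C[M]\Downarrow$, then by adequacy $\[C[M]\]_C \neq \bot$; since $\bot$ is the least strategy and $\[C[M]\]_C \subseteq \[C[N]\]_C$, we obtain $\[C[N]\]_C \neq \bot$, whence $C[N]\Downarrow$ by adequacy again. Thus $M \lesssim N$.

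For the completeness implication, that $M \lesssim N$ implies $\[M\]_C \subseteq \[N\]_C$, I would argue contrapositively and use definability to manufacture a separating context. The model is algebraic --- every strategy is the directed union of its compact substrategies and composition is continuous with respect to inclusion --- so if $\[M\]_C \not\subseteq \[N\]_C$ then, using thread-independence to reduce to well-opened plays, there is a single well-opened play $s \in \[M\]_C$ with $s \notin \[N\]_C$, lying in some compact substrategy. The heart of the argument is then to realize, via the definability machinery, a compact \emph{test} strategy $\tau$ and package it as a closing context $C[\_]:\com$ whose composite with a term $P:T$ converges precisely when $s$ is present in $\[P\]_C$, so that $C[M]\Downarrow$ while $C[N]$ diverges. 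Concretely, $\tau$ plays copycat along $s$ and signals convergence (produces $\ski$) once $s$ has been matched; by the factorization lemmas this compact, control-blind strategy is definable, first as a term $x:\exn \vdash M_\tau$ and then as $\lett\spc x = \nxn\spc \inn\spc M_\tau$. Feeding this into $C[\_]$ and invoking adequacy at both ends contradicts $M \lesssim N$, forcing $s \in \[N\]_C$ and hence $\[M\]_C \subseteq \[N\]_C$.

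I expect the main obstacle to be the construction of the test $\tau$ together with the verification that its composition with $\[P\]_C$ detects exactly the play $s$. Because control strategies carry control pointers and lie outside the well-bracketed fragment, the test must be built so as to respect this extra structure: in particular, the copycat-along-$s$ behaviour has to reproduce the control pointers of $s$ faithfully, which is precisely why the factorization through $\xcell$ and $\callc$ --- rather than a naive well-bracketed test --- is required. Once the definable test is in hand, the remaining bookkeeping (compactness, continuity of composition, and the two appeals to adequacy) is routine, and the full abstraction argument then follows the standard pattern from finite definability.
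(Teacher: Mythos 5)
Your proposal is correct and follows essentially the same route as the paper: soundness from monotonicity of composition under inclusion plus the adequacy proposition, and completeness from the standard separating-test argument using compactness and the finite definability result obtained by factorizing compact control strategies through $\xcell$ and $\callc$ down to the well-bracketed $\Lr$ fragment of \cite{AHM}. The paper itself compresses this into the remark that ``the proof of full abstraction based on finite definability is now standard,'' and your write-up is a faithful expansion of exactly that argument.
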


\section{Conclusions and Further Directions}
\paragraph{Model checking exceptions}
Giving different representations of exceptions in games models may be useful in the developing field of program-verification based on semantic games. For example, we may observe that the set of exception-propagating sequences over a finite alphabet (with a specified subset of distinguished exception tokens) is regular, giving a way of extending results characterizing finite-state representable fragments of imperative languages to include local exceptions. On the other hand control pointers describe control flow (and, in particular,  exception handling points) directly, and so adding them to  game semantic approaches to control flow analysis \cite{MH} offers the possibility of reasoning about e.g. exception safety.
\paragraph{Delimited Control}  Further instances of delimited continuations such as locally declared, dynamically bound \emph{prompts} \cite{GRR} could be modelled by a similar analysis relating CPS interpretation to the stateful behaviour in games models.
\paragraph{Good Variables} Languages such as ML and Java have explicit exception types, so that an object of exception type must behave as an exception, whereas there is clearly no such constraint on  objects of the product type which we have used as an exception type. Extending our full abstraction results to such languages is liable to require some characterization of such behavioural constraints. This problem is analogous to the ``good variable'' problem for references, and we may look to research in this area for approaches to model ``good exceptions''  \cite{ntex}. Implementing \emph{wildcard handling} (e.g. Java's $\mathtt{finally}$) becomes straightforward when exceptions are passed as names through an exceptions monad, although a wildcard handler typically cannot trap an exception and then discover its name, and so a model should reflect this constraint. 

\bibliography{names}
\section*{Appendix: Soundness for $\Lc$ via CPS translation }
We give an interpretation of $\Lc$ in $\Lr$ --- a  CPS translation corresponding to the action of the CPS monad on our denotational model.  
This acts on types as follows:
\begin{itemize}
\item $0^C = 0$, $1^C = 1$,
\item $(S + T)^C = S^C + T^C$,
\item $(S \times T)^C = S^C \times T^C$,
\item $(S \rightarrow T)^C = (S^C \times (T^C \rightarrow \emp)) \rightarrow \emp$.
\end{itemize}
Values $x_1:S_1,\ldots,x_n:S_n \vdash_v V:T$ are translated as values $x_1:S_1^C,\ldots,x_n:S^C_n \vdash_v V^C:T^C$ and computations $x_1:S_1,\ldots,x_n:S_n \vdash_c M:T$ are translated as \emph{values} $x_1:S_1^C,\ldots,x_n:S^C_n \vdash_v M^C:(T^C \rightarrow 0)\rightarrow 0$ as defined in Table 3.

\begin{table}
\begin{itemize}
\item $(x)^C = x$,\ \ \  
$(\lett\spc x = M\spc \inn\spc N)^C = \lambda \kappa.M^C\spc \lambda m.\lett x =m \spc \inn\spc (N^C\spc \kappa)$
\item  $[V]^C =  \lambda \kappa.\kappa\spc V^C$ 
\item  $(\lambda x.M)^C = \lambda z.\match\spc (x,\kappa) \spc \as\spc x\spc \inn\spc   M^C\spc \kappa $,\ \ \ \  $(U\spc V)^C = U^C\spc V^C$
\item $\langle U,V\rangle^C = \langle U^C,V^C\rangle$ \ \ \ \ $(\match\spc (x,y)\spc \as\spc V. M)^C = \lambda \kappa.\match\spc (x,y)\spc \as\spc V^C. M^C\spc \kappa$ 
\item $()^C = ()$,\ \ \  $\void(V)^C = \lambda \kappa.\kappa\spc \void(V^C)$.
\item  $\inj_2(U)^C = \inj_2(V^C)$,\  $\inj_1(V)^C = \inj_1(V^C),$\\
$(\cse\spc V\spc \spc \as\spc \inj_1(x).M|\inj_2(x).N)^C = \cse\spc V^C\spc \as\spc \inj_1(x).M^C|\inj_2(x).N^C $ 
\item $\new^C = \lambda\kappa.\new\spc a\spc \inn\spc \kappa\spc \langle \lambda x.\fst(x)\spc (a:= \snd(x)), \lambda y. \snd(y)\spc \deref(a)\rangle$ 
 \item  $\callcc(V)^C = \lambda \kappa. V^C\spc \langle k,\lambda x.\match\spc (y,z)\spc \as\spc x \spc \inn \spc k\spc x\rangle$   
\end{itemize}
\caption{CPS translation of exception-free terms}
\end{table}
Extending to ${\mathcal{\L_{C}^{\#}}}$ by setting  $\#(M)^C = \lambda \kappa.M^C\spc \tau$ (where $\tau:1 \rightarrow 0$ is a variable representing the top-level continuation), we may show that reduction of a term tracks that of its translation: 
\begin{proposition}For any program $M:\com$, $M \Downarrow$ if and only if $M^C\spc \tau  \longrightarrow \tau\spc ()$  
\end{proposition}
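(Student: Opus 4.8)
The plan is to establish a \emph{simulation} between reduction in $\Lc$ (extended with the $\#$ operator) and reduction in $\Lr$ under the translation $(\_)^C$, and then read off both implications from the fact that both reduction relations are deterministic. First I would record the routine structural facts. A substitution lemma, $(M[V/x])^C = M^C[V^C/x]$ and $(V[W/x])^C = V^C[W^C/x]$, follows by induction on the term; it is exactly what is needed to match a source $\beta$- or $\lett$-step against the corresponding administrative reductions in the target. I would also observe that $M^C$ is always a value (of type $(T^C \rightarrow \emp)\rightarrow \emp$), and extend the translation to stores, setting $\E^C$ to be the pointwise translation of the store $\xs$ together with the (untouched) location names, so that a target configuration carries a faithful image of the source environment.

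The core is a decomposition-and-step lemma. For each evaluation context $E[\_]$ I would define a continuation value $\lceil E\rceil_\kappa : T^C \rightarrow \emp$ built by composing the translations of the frames of $E$ with the outer continuation $\kappa$, and show $(E[M])^C\spc \kappa \longrightarrow^* M^C\spc \lceil E\rceil_\kappa$ up to administrative reduction. Using this, I would show that every source step $M,\E \longrightarrow M',\E'$ is tracked by a nonempty target reduction $M^C\spc \tau,\E^C \longrightarrow^+ (M')^C\spc \tau,(\E')^C$ (again up to administrative redexes, which I would handle by a colon-style refinement of the translation to keep the bookkeeping finite). The commuting conversions, products, sums and the reference rules for $\new$, $\set$ and $\deref$ are immediate from Table~3 and the substitution lemma, while the base case is the observation $[()]^C\spc \tau \longrightarrow \tau\spc ()$.

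The main obstacle is the control fragment: verifying that the clauses for $\callcc$ and $\#$ simulate the operational rules $E[\callcc\spc V] \longrightarrow E[V\spc \lambda x.\#E[x]]$ and $E[\#(M)] \longrightarrow M$. Here I would show that the reified continuation $\lambda x.\#E[x]$ translates (once more up to administrative reduction) to precisely the continuation value $\lceil E\rceil_\tau$ captured by $\callcc(V)^C$, and that $\#(M)^C\spc \kappa = M^C\spc \tau$ discards the ambient continuation and resets it to the top-level $\tau$, matching the operational discarding of the surrounding context $E$. This is where the correspondence between the ``current continuation'' of the source and the explicitly threaded continuation of the CPS target must be pinned down exactly, including its interaction with the decomposition lemma; I expect this to be the delicate point, since the administrative reductions and the continuation-capture must be shown to cohere.

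Finally I would derive the proposition. For the forward direction, if $M,\varnothing \longrightarrow^* [()],\E$ then iterating the step lemma gives $M^C\spc \tau \longrightarrow^* [()]^C\spc \tau \longrightarrow \tau\spc ()$. For the converse I would argue by determinism: since the step lemma sends each source step to at least one target step, a diverging source computation would yield an infinite target reduction, which by determinism of $\Lr$-reduction precludes ever reaching the normal form $\tau\spc ()$; and a source computation stuck at a closed normal form other than $[()]$ would translate to a target normal form distinct from $\tau\spc ()$ (a progress argument for closed programs of type $\com$ rules out any stuck form except $[()]$ itself). Hence $M^C\spc \tau \longrightarrow^* \tau\spc ()$ forces $M\Downarrow$, completing both directions.
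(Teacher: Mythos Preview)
Your proposal is correct and is precisely the standard argument the paper gestures at with its single sentence ``we may show that reduction of a term tracks that of its translation''. The paper gives no further detail; your substitution lemma, evaluation-context-to-continuation map $\lceil E\rceil_\kappa$, step-simulation lemma, and determinism argument for the converse constitute the canonical way to cash out that sentence (essentially Plotkin's colon-translation technique, as you note).
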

We note that CPS interpretation corresponds (up to isomorphism) to the interpretation of $\Lc$-types and terms:
\begin{proposition}For any $\L$-type $T$, there is an isomorphism of arenas $\phi_T:U_C(\[T\]) \cong \[T^{C}\]$  such that  for any $\Lc$-term, $\Gamma \vdash M:T$, $U_C(\[\Gamma \vdash M:T\]);\phi_T = \[M^{C}\]$.
\end{proposition}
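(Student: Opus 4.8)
The plan is to define $\phi_T$ by induction on the $\L$-type $T$ and then to verify the term equation by induction on typing derivations. For the isomorphism, the base cases are immediate: $\emp^C = \emp$ and $\com^C = \com$, and $U_C$ fixes the initial and terminal arenas, so $\phi_\emp$ and $\phi_\com$ are identities. Because $U_C$ acts purely by relabelling answers as questions, it commutes with the disjoint-sum and graft operations on forests and hence preserves arena products, coproducts and exponentials; since the CPS translation is homomorphic on $\times$ and $+$, I would set $\phi_{S \times T} = \phi_S \times \phi_T$ and $\phi_{S + T} = \phi_S + \phi_T$, combining the inductive isomorphisms. The essential case is the function type. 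Here I would compute $U_C(\[S \rightarrow T\]) = U_C(\[S\]) \Rightarrow U_C(\Sigma\[T\])$ using preservation of exponentials, apply the arena isomorphism underlying Lemma \ref{cc}, namely $U_C(\Sigma\[T\]) \cong (U_C(\[T\]) \Rightarrow \Sigma\emp) \Rightarrow \Sigma\emp$, transport along the inductive isomorphisms $\phi_S$ and $\phi_T$, and finish with the currying isomorphism $A \Rightarrow (B \Rightarrow C) \cong (A \times B) \Rightarrow C$, arriving at $(\[S^C\] \times (\[T^C\] \Rightarrow \Sigma\emp)) \Rightarrow \Sigma\emp = \[(S \rightarrow T)^C\]$. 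Extending to contexts by $\phi_\Gamma = \phi_{S_1} \times \cdots \times \phi_{S_n}$, and reading off the induced isomorphism $\psi_T : U_C(\Sigma\[T\]) \cong \[(T^C \rightarrow \emp) \rightarrow \emp\]$ for computation types, completes the first part.

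I would then prove $U_C(\[\Gamma \vdash M : T\]); \phi_T = \[M^C\]$ (with $\psi_T$ in place of $\phi_T$, and the domain mediated by $\phi_\Gamma$, when $M$ is a computation) by induction on the derivation, treating values and computations together. The structural clauses --- variable, pairing, injection, $\match$, $\cse$ and application --- reduce to the \emph{naturality} of the isomorphisms (that $\phi$ commutes with projections, pairing, injections, case analysis and the exponential transpose) together with the fact that $U_C$ is identity-on-morphisms and preserves Cartesian closed structure. The two clauses carrying the content of the continuations monad are $[V]^C = \lambda\kappa.\kappa\spc V^C$ and the translation of $\lett\spc x = M\spc \inn\spc N$: under the identification of Lemma \ref{cc} these are exactly the syntactic realizations of the monad unit and of Kleisli lifting for the CPS monad, so the denotational clauses for $[V]$ (composition with $\eta$) and $\lett$ (strength followed by $(\_)^*$) transport across $\psi$ by the monad correspondence $\Sigma \cdot U_C \cong U_C \cdot (\Sigma\emp)^{(\Sigma\emp)^{\_}}$. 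It then remains to check the two effect constants. For $\new$, whose direct denotation is the well-bracketed strategy $\cell$, the equation follows by unfolding $\new^C$ and reusing the monad correspondence. For $\callcc$, whose denotation is the strategy $\callc$, I would compare $\callc$ directly against the strategy denoted by $\lambda\kappa.V^C\spc\langle k, \ldots\rangle$ under $\psi$.

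The main obstacle is this $\callcc$ clause, together with the $\lambda$-abstraction and application cases on which it rests. These are precisely the points where the double-negation shape $(\_ \rightarrow \emp) \rightarrow \emp$ of the CPS-translated arrow type meets the arena isomorphism of Lemma \ref{cc}, and where the \emph{bracketing-violating} behaviour of $\callc$ must be seen to coincide, move for move, with the well-bracketed $\Lr$-denotation of its CPS image --- the non-local jump of $\callc$ corresponding to the reuse of the captured continuation $k$ in the translation. Verifying this will require a careful play-by-play matching of $\callc$ against the copycat behaviour of the CPS term across $\psi$, and checking that $\phi_{S \rightarrow T}$ is natural with respect to the transpose so that the abstraction and application cases close the induction. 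This is essentially the continuations correspondence of \cite{LT}, adapted to the present family-indexed setting.
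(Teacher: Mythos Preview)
Your approach is correct and matches the paper's, which only sketches the key function-type case (using Lemma~\ref{cc} and currying, exactly as you do) and leaves the term-level induction implicit. One remark: the $\callcc$ case you flag as the main obstacle is in fact rather direct here, since $\callc$ is \emph{defined} in the paper via the isomorphism $(A \Rightarrow \Sigma 0) \Rightarrow \Sigma 0 \cong \Sigma A$ underlying Lemma~\ref{cc}, so its CPS correspondence holds essentially by construction rather than requiring a separate play-by-play verification.
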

\begin{proof}The key type-constructor is the function type: we have $U_C(\[S \rightarrow T\]) = U_C(\[S\]) \Rightarrow  U_C(\Sigma \[T\]) \cong \[S^{C}\] \Rightarrow \R^{\R^{\[T^{C}\]}} = (\[S^{C}\] \times \[(T^{C}\]) \Rightarrow \R) \Rightarrow \R  \cong \[(S^{C} \times (T^{C} \rightarrow \emp)) \rightarrow \emp\] \cong (S \rightarrow T)^{C}$.    
\end{proof} Hence interpretation in the games model is sound and adequate:
\begin{proposition} $M\Downarrow$ if and only if $\[M\] \not = \bot$. 
\end{proposition}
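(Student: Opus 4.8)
The plan is to combine the two preceding propositions of this appendix with the adequacy of the reference-only model of $\Lr$ (proved in \cite{AHM}), using the CPS translation $(\_)^C$ as a bridge between the (non-well-bracketed) model of $\Lc$ in $\G$ and the well-bracketed model of $\Lr$. The word ``Hence'' signals that all three ingredients should fit together as a chain of equivalences, rather than requiring a fresh argument.

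First I would reduce convergence of a program $M:\com$ to convergence of its CPS image. By the operational-correctness proposition, $M \Downarrow$ if and only if $M^C\spc \tau \longrightarrow \tau\spc ()$, which replaces a statement about the full fragment $\Lc$ by one about the reference-only fragment extended with the distinguished top-level continuation $\tau$ and the $\#$-operator. Next I would transport this to the denotational side: since $(\_)^C$ lands in $\Lr$, the term $M^C\spc \tau$ is (after interpreting $\tau$ as the canonical top continuation) a $\Lr$-computation, and the adequacy result for $\Lr$ gives $M^C\spc \tau \longrightarrow \tau\spc ()$ if and only if $\[M^C\spc \tau\] \neq \bot$.

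Finally I would identify this denotation with $\[M\]$. By the semantic-correspondence proposition we have $U_C(\[M\]);\phi_T = \[M^C\]$; since $U_C$ is identity-on-morphisms (it only relabels answer-moves as questions at the level of arenas) and $\phi_T$ is an isomorphism of arenas, both preserve and reflect the property of being the bottom strategy. Hence $\[M^C\] \neq \bot$ if and only if $\[M\] \neq \bot$, and closing off the top continuation preserves this equivalence for programs of type $\com$. Chaining the three equivalences yields $M \Downarrow$ if and only if $\[M\] \neq \bot$.

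The step I expect to be the main obstacle is the second one: matching the syntactic convergence criterion $M^C\spc\tau \longrightarrow \tau\spc ()$ to the model's non-bottom criterion, since the cited adequacy for $\Lr$ is phrased in terms of its own convergence (reduction to $\ski$). One must check that interpreting the free top-level continuation $\tau$ as the canonical one-move top strategy makes the two notions of success coincide --- equivalently, that the $\#$-construct and the extension to $\mathcal{L}_C^{\#}$ faithfully simulate $\Lc$-reduction and that feeding in $\tau$ does not disturb non-bottomness. Once this bookkeeping is in place, the remaining implications are immediate transports across the isomorphisms $U_C$ and $\phi_T$.
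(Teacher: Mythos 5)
Your proposal is correct and takes essentially the same approach as the paper: the paper gives no explicit proof beyond the word ``Hence'', relying on exactly the chain you describe --- operational correctness of the CPS translation, adequacy of the $\Lr$ model from \cite{AHM}, and the semantic correspondence via $U_C$ and $\phi_T$. The bookkeeping point you flag about interpreting the top-level continuation $\tau$ is real but routine, and does not alter the fact that your route is the paper's route.
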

\end{document}